\documentclass[11pt]{article}
\usepackage{graphics}
\usepackage{latexsym}
\usepackage{algorithm}
\usepackage{algorithmic}
\usepackage{tikz}
\usepackage{soul}
\usepackage{mathtools}

\newcommand{\qed}{\hfill$\Box$}
\newcommand{\cO}{{\cal O}}
\newcommand{\cC}{{\cal C}}
\newcommand{\cA}{{\tt Ring}}

\newenvironment{proof}{\noindent {\bf Proof.}}{\qed}

\newtheorem{theorem}{Theorem}[section]
\newtheorem{lemma}{Lemma}[section]
\newtheorem{corollary}{Corollary}[section]

\newtheorem{proposition}{Proposition}[section]

\topmargin-0.5in
\oddsidemargin0in
\textheight      9.0in
\textwidth       6.5in
\begin{document} 

\baselineskip 0.2in
\parskip      0.1in
\parindent    0em

\bibliographystyle{plain}

\title{{\bf Exploration of Faulty Hamiltonian Graphs}}

\author{
David Caissy \footnotemark[1]
\and
Andrzej Pelc \footnotemark[1] \footnotemark[2] 
}

\date{ }
\maketitle
\def\thefootnote{\fnsymbol{footnote}}

\footnotetext[1]{
\noindent
 D\'{e}partement d'informatique, Universit\'{e} du Qu\'{e}bec en Outaouais,
Gatineau, Qu\'{e}bec J8X 3X7,
 Canada. E-mails:
{\tt david.caissy@gmail.com}, {\tt pelc@uqo.ca}}

\footnotetext[2]{
\noindent
Research supported in part by the NSERC Discovery
Grant 8136 -- 2013 and by the
Research Chair in Distributed Computing of the
Universit\'{e} du Qu\'{e}bec en Outaouais. 
}

\begin{abstract}
We consider the problem of exploration of networks, some of whose edges are faulty. A mobile agent,
situated at a starting node and unaware of which edges are faulty, 
has to explore the connected fault-free component of this node
by visiting all of its nodes. The {\em cost} of the exploration is the number of edge traversals. 
For a given network and
given starting node, the {\em overhead} of an exploration algorithm is the worst-case ratio
(taken over all fault configurations)
of its cost to the cost of an optimal algorithm which knows where faults are situated.
An exploration algorithm, for a given network and given starting node, 
is called {\em perfectly competitive} if its overhead is the smallest among
all exploration algorithms not knowing the location of faults.
We design a perfectly competitive exploration algorithm for any ring, 
and show that, for networks modeled by hamiltonian graphs, 
the overhead of any DFS exploration is at most 10/9 times larger than that
of a perfectly competitive algorithm. Moreover, for hamiltonian graphs of size at least 24,
this overhead is less than 6\% larger than that of a perfectly competitive algorithm.

\vspace*{1cm}

{\bf Keywords:} algorithm, faulty edge, exploration, mobile agent, hamiltonian graph

\vspace*{5cm}
\end{abstract}

\pagebreak

\section{Introduction}

Exploration of networks by visiting all of its nodes is one of the basic tasks performed by mobile agents in networks.
In applications, a software agent may need to collect data residing at nodes of a network, or a mobile robot 
may need to collect samples of ground in a contaminated mine whose corridors form links of a network, with 
corridor crossings being nodes. In many situations, some links of the network may be faulty, preventing the mobile agent
to traverse them. In a computer network a fault may be a malfunctioning link, and in a cave it may be an obstructed corridor.
In this situation the agent may be unable to reach some nodes of the network, and the task is then to explore the accessible part,
as efficiently as possible.

The network is modeled as a simple connected
undirected graph $G=(V,E)$, called {\em graph} in the sequel.
The agent is initially situated at a starting node $v$ of the graph. It has a faithful labeled map
of the graph with its starting position marked. Each node has a distinct label, and ports at each node of degree $d$ 
are labeled by integers $1,\dots , d$.
Thus the agent knows which port at any node leads to which neighbor. 
 However, some of the edges of the graph are faulty.
The set of faulty edges $F \subseteq E$ is called a {\em fault configuration}.  The agent
does not know {\em a priori} the location of faults nor their number. When visiting a node for the
first time, the agent discovers ports corresponding to faulty edges, if any, incident to this node. A faulty edge prevents the agent
from traversing it. The fault-free subgraph $G'$, resulting from the original graph $G=(V,E)$ after removing all faulty edges,
may be disconnected. Let $C$ be the connected component of the subgraph $G'$,  containing the starting node $v$.
We will call $C$ the fault-free component corresponding to the fault configuration $F$ and to the node $v$.
The task of the agent is to {\em explore} $C$ by visiting all of its nodes.
Exploration is finished when the last node of $C$ is visited (the agent does not have to go back to its starting node).
For a given graph $G$, starting node $v$, and fault configuration $F$, 
the {\em cost} ${\cal C}(A,G,v,F)$ of an exploration algorithm $A$ is the number of edge traversals it
performs when exploring the connected component $C$ containing $v$ and corresponding to $F$.

An agent that knows $F$, and hence knows the component $C$, has some algorithm
exploring the component $C$ at the minimum possible cost. Call this optimal cost $opt(G,v,F)$.
Note that this optimal algorithm and its cost might be difficult to find in many cases. 
Now consider an exploration algorithm $A$ for a given graph $G$ and starting 
node $v$, that {\em does not know} $F$, as supposed in our scenario.
A natural measure of performance
of such an algorithm, cf. \cite{MaPe}, is the worst-case ratio between its cost and the optimal
cost $opt(G,v,F)$, where the worst case is taken over all fault configurations $F$. This number
$\mbox{max}_{F \subseteq E} \frac{{\cal C}(A,G,v,F)}{opt(G,v,F)}$ is called the {\em overhead} of $A$,
and is denoted ${\cal O}_{A,G,v}$. This measure is similar to the competitive ratio of 
on-line algorithms. It measures the penalty incurred by the algorithm, due to some kind of ignorance.
In the case of on-line algorithms, they do not know future events, 
which are known to an off-line algorithm (serving as a benchmark). 
The exploration algorithms we want to design do not know the fault configuration, which is
known to an optimal algorithm (serving as a benchmark).

Given a graph $G$ and a starting node $v$,
an exploration algorithm (not knowing the fault configuration) 
is called {\em perfectly competitive} if it has the smallest overhead among all exploration algorithms
working under this scenario. Our aim is to construct exploration algorithms with small overhead:
either perfectly competitive algorithms or ones whose overhead only slightly exceeds the smallest
possible overhead.

In the sequel, when the graph $G$ and the starting node $v$ are fixed, we will write
${\cal C}(A,F)$ instead of ${\cal C}(A,G,v,F)$, $opt(F)$ instead of $opt(G,v,F)$, and
${\cal O}_{A}$ instead of ${\cal O}_{A,G,v}$. If the link corresponding to port $p$ at some node is faulty,
we will say that port $p$ at this node is {\em faulty}, otherwise we will say that port $p$ is {\em free}.

\subsection{Related work}

The problem of 
exploration and navigation of mobile agents in an unknown environment
has been extensively studied in the literature (cf. the survey \cite{RKSI}).
Our scenario falls in this category:  in our case, the unknown
ingredient of the environment is the fault configuration.
The explored environment has been modeled in the literature in two distinct ways: 
either as a geometric terrain in the plane, e.g., an 
unknown terrain with convex obstacles \cite{BRS},
or a room with polygonal \cite{DKP} or rectangular \cite{BBFY} obstacles, or 
as we do, i.e., 
as a graph, assuming that the agent may only move along its edges. The graph
model can be further specified in two different ways:
either the graph is directed and strongly connected, in which case the agent can move only from
tail to head of a directed edge  \cite{AH,BFRSV,BS,DP}, or the graph is undirected (as we assume)
and the agent can traverse edges in both directions  \cite{ABRS,BRS2,DKK,PaPe}.
The efficiency measure adopted in most papers dealing with exploration of graphs is the cost
of completing this task, measured by the number of edge traversals by the agent.
In some papers, further restrictions on the moves of the agent are imposed.
It is assumed that the robot has either a restricted tank \cite{ABRS,BRS2},
and thus has to periodically return to the base for refueling, or that it is attached to the
base by a rope or cable of restricted length \cite{DKK}. 

Another direction of research concerns
exploration of anonymous graphs.
In this case it is impossible
to explore arbitrary graphs and stop after exploration, if no marking of nodes is allowed. 
Hence some authors \cite{BFRSV,BS}
allow {\em pebbles} which the agent can drop on nodes to recognize already visited ones, and
then remove them and drop them in other places. A more restrictive scenario assumes
a stationary token that is fixed at the starting node of the agent \cite{CDK,PeTi}. 
Exploring
anonymous graphs without the possibility of marking nodes (and thus possibly without stopping) 
is investigated, e.g., in \cite{DFKP}. 
The authors 
concentrate attention not on the cost of exploration but on the minimum amount of memory sufficient
to carry out this task.
Exploration of anonymous graphs is also considered in \cite{DM,DJMW,DW}.

A measure of performance of exploration algorithms, similar to the competitive ratio, and thus to our
notion of overhead, has been used, e.g., in \cite{DePe}, to study 
exploration of unknown graphs and graphs
for which only an unoriented map is available. In the above paper, the benchmark was the performance
of an algorithm having full knowledge of the graph. 

Our problem belongs to the domain of fault-tolerant graph exploration whose other aspects were studied, e.g., in
\cite{CKR1,CKR2,CKMP,DFKRPS,DFPS1,DFPS2,FKMS,KMRS1,KMRS2}, in the context of searching for a black hole in a network. A black hole is a process situated at
an unknown node and destroying all agents visiting it. The goal is to locate a black hole using as few agents as possible, while keeping
at least one agent surviving. Another aspect of fault-tolerant exploration was studied in \cite{FPS}. It concerned the exploration of a line by a single agent, and faults
involved directions chosen by the agent: for example, if the agent planned on going right and the fault occurred, it resulted in going left.

The paper most closely related to the present work is \cite{MaPe}. The authors used the same scenario and the goal was also to find
exploration algorithms with the smallest possible overhead. However, they restricted attention to the class of trees. For the simplest case of the line
they designed a perfectly competitive algorithm, and they proposed another algorithm,  working for arbitrary trees, whose overhead is 
at most 9/8 times larger than that of a perfectly competitive algorithm.

\subsection{Our results}

We consider hamiltonian graphs, i.e., graphs that contain a simple cycle including all nodes.
This is a large class containing such important examples as complete graphs, hypercubes, rings, and even-size tori.
Our goal is to design exploration algorithms for hamiltonian graphs some of whose edges are faulty, with small overhead. We have two main results.
\begin{itemize}
\item
For any ring, and any starting node, 
we design a perfectly competitive exploration algorithm. The algorithm  and its
overhead depend only on the size of the ring. This should be contrasted with the perfectly competitive exploration algorithm
for the line, designed in \cite{MaPe}, whose behavior and overhead depend on the distance between the starting node and the closer endpoint of the line,
and {\em do not} depend on the size of the line.
\item
For an arbitrary hamiltonian graph of size $n\geq 3$ and an
arbitrary starting node, we show that the overhead of any DFS exploration is at most $(2n-4)/(n-1)$. We also show that this overhead is at most 10/9 times larger than that
of a perfectly competitive algorithm, for any hamiltonian graph. Moreover, for hamiltonian graphs of size at least 24,
this overhead turns out to be less than 6\% larger than that of a perfectly competitive algorithm.
\end{itemize}
The total computation time used by our exploration algorithms is linear in the number of edges of the
explored graph. Our main contribution is in the analysis of overhead of these algorithms, showing that
simple and natural exploration strategies perform very well in faulty hamiltonian graphs. 


\section{Perfectly competitive exploration of rings}

In this section we design a perfectly competitive algorithm working for an arbitrary ring.
A ring is a graph
$R = (V,E)$, where $V = \{v_1, ...,v_n\}$, $n
\geq 3$ and $E = \{\{v_i, v_{i+1}\} : i=1, ..., n-1\} \cup \{ \{v_n, v_1\}\}$.
For any $i>1$ we call $v_{i-1}$ the {\em predecessor} of $v_i$ and for any $i<n$ we call
$v_{i+1}$ the {\em successor} of $v_i$. The predecessor of $v_1$ is $v_n$ and the successor of $v_n$ is $v_1$.
At any node we denote by $\ell$ the port leading to the predecessor of $v$ and by $r$ the port leading to the successor of $v$. 

We fix a ring $R$ and a starting node $v$. For a fixed non-empty fault configuration $F$, denote by $x$ the distance between $v$ and the first fault encountered when always taking the port $\ell$,
and denote by $y$ the distance between $v$ and the first fault encountered when always taking the port $r$. 

An agent knowing the configuration $F$, and hence knowing $x$ and $y$, has the following simple exploration algorithm which is optimal:
go first to the closest fault and then go back until a fault is met. Hence $opt(F)$ is $2x + y$ if $x \leq y$, and is $2y + x$ if
$x > y$.
	
The following procedure has two parameters: a positive integer $s$ and a {\em direction} $d \in \{\ell,r\}$. It causes the agent to
go at most $s$ steps always taking port $d$, until a fault is met. It returns the boolean value $b=false$ if a fault is met and $b=true$ otherwise.

\begin{center}
\fbox{
\begin{minipage}{15cm}

{\bf Procedure} GO-AT-DISTANCE ($s$, $d$)

\hspace*{1cm} {\bf while} port $d$ at the current node is free and travelled distance is $< s$ {\bf do}\\
\hspace*{1cm} \hspace*{1cm} take port $d$\\
\hspace*{1cm} {\bf if} port $d$ at the current node  is free {\bf then} $b:=true$ {\bf else} $b:= false$

\end{minipage}
}
\end{center}

The next procedure causes the agent to successively take port $d \in \{\ell,r\}$ until a fault is met.
It has a provision to avoid returning to the starting node $v$, if there are no faults.

\begin{center}
\fbox{
\begin{minipage}{13cm}

{\bf Procedure} GO-FIRM($d$)

\hspace*{1cm} {\bf while} port $d$ at the current node  is free and 
fewer than $n$ nodes\\
\hspace*{1cm} have been visited {\bf do}\\
\hspace*{1cm} \hspace*{1cm} take port $d$

\end{minipage}
}
\end{center}

We now formulate our algorithm for ring exploration.

\begin{center}
\fbox{
\begin{minipage}{11cm}

{\bf Algorithm} {\tt Ring}\\

\hspace*{1cm} {\bf if} port $\ell$ is faulty {\bf then} GO-FIRM($r$)\\
\hspace*{1cm} {\bf else}\\
\hspace*{1cm} \hspace*{1cm} {\bf if} port $r$ is faulty {\bf then} GO-FIRM($\ell$)\\
\hspace*{1cm} \hspace*{1cm} {\bf else}\\
\hspace*{1cm} \hspace*{1cm} \hspace*{1cm} {\bf if} $n \leq 5$ {\bf then}\\
\hspace*{1cm} \hspace*{1cm} \hspace*{1cm} \hspace*{1cm} GO-FIRM($\ell$)\\
\hspace*{1cm} \hspace*{1cm} \hspace*{1cm} \hspace*{1cm} GO-FIRM($r$)\\
\hspace*{1cm} \hspace*{1cm} \hspace*{1cm} {\bf else}\\
\hspace*{1cm} \hspace*{1cm} \hspace*{1cm} \hspace*{1cm} {\bf if} $6 \leq n \leq 19$ {\bf then}\\
\hspace*{1cm} \hspace*{1cm} \hspace*{1cm} \hspace*{1cm} \hspace*{1cm}  GO-AT-DISTANCE(1, $\ell$)\\
\hspace*{1cm} \hspace*{1cm} \hspace*{1cm} \hspace*{1cm} \hspace*{1cm} {\bf if} $b$ {\bf then}\\
\hspace*{1cm} \hspace*{1cm} \hspace*{1cm} \hspace*{1cm} \hspace*{1cm} \hspace*{1cm} GO-FIRM($r$)\\
\hspace*{1cm} \hspace*{1cm} \hspace*{1cm} \hspace*{1cm} \hspace*{1cm} \hspace*{1cm} GO-FIRM($\ell$)\\
\hspace*{1cm} \hspace*{1cm} \hspace*{1cm} \hspace*{1cm} \hspace*{1cm} {\bf else}\\
\hspace*{1cm} \hspace*{1cm} \hspace*{1cm} \hspace*{1cm} \hspace*{1cm} \hspace*{1cm} GO-FIRM($r$)\\
\hspace*{1cm} \hspace*{1cm} \hspace*{1cm} \hspace*{1cm} {\bf else}\\ 
\hspace*{1cm} \hspace*{1cm} \hspace*{1cm} \hspace*{1cm} \hspace*{1cm}  GO-AT-DISTANCE(2, $\ell$)\\
\hspace*{1cm} \hspace*{1cm} \hspace*{1cm} \hspace*{1cm} \hspace*{1cm} {\bf if} $b$ {\bf then}\\
\hspace*{1cm} \hspace*{1cm} \hspace*{1cm} \hspace*{1cm} \hspace*{1cm} \hspace*{1cm} GO-FIRM($r$)\\
\hspace*{1cm} \hspace*{1cm} \hspace*{1cm} \hspace*{1cm} \hspace*{1cm} \hspace*{1cm} GO-FIRM($\ell$)\\
\hspace*{1cm} \hspace*{1cm} \hspace*{1cm} \hspace*{1cm} \hspace*{1cm} {\bf else}\\
\hspace*{1cm} \hspace*{1cm} \hspace*{1cm} \hspace*{1cm} \hspace*{1cm} \hspace*{1cm} GO-FIRM($r$)\\

\end{minipage}
}
\end{center}

Since the time of computation at each visit of a node is constant, the total computation time of Algorithm {\tt Ring}
is linear in the size of the ring.
The following proposition gives the overhead of Algorithm {\tt Ring}.

\begin{proposition}\label{ring}
The overhead of Algorithm {\tt Ring} is:

\

\begin{tabular}{rll}
	$\bullet$ & $1$ 					& when\ \ $n = 3$ \\
	\\
	$\bullet$ & $\frac{2n - 3}{n}$ 	    & when\ \ $4 \leq n \leq 5$ \\
	\\
	$\bullet$ & $\frac{3}{2}$ 			& when\ \ $6 \leq n \leq 7$ \\
	\\
	$\bullet$ & $\frac{2n - 2}{n + 1}$  & when\ \ $8 \leq n \leq 19$ \\
	\\
	$\bullet$ & $\frac{9}{5}$ 			& when\ \ $20 \leq n \leq 23$ \\
	\\
	$\bullet$ & $\frac{2n - 1}{n + 2}$  & when\ \ $n \geq 24$ \\	
\end{tabular} 
\end{proposition}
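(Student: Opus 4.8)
The plan is to compute the overhead directly from its definition, ${\cal O}_{\tt Ring}=\max_F {\cal C}({\tt Ring},F)/opt(F)$, by a case analysis that mirrors the branching structure of Algorithm {\tt Ring}. First I record the two benchmarks. For a nonempty configuration with left/right fault distances $x,y$ (so the fault-free component is the arc consisting of the $x$ nodes reachable to the left of $v$, the node $v$, and the $y$ nodes reachable to the right), the stated optimal cost is $opt(F)=2\min(x,y)+\max(x,y)$, while for $F=\emptyset$ one has $opt(\emptyset)=n-1$. A second, purely combinatorial ingredient is the constraint linking $x$ and $y$: one always has $x+y\le n-1$, with equality exactly when $F$ is a single faulty edge (the component is then a hamiltonian path), and $x+y\le n-2$ as soon as $F$ has at least two edges. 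This is the constraint that caps $y$ when it is later pushed to its maximum.

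Next I translate each leaf of the algorithm into a closed-form cost. The key observation about the peek is that GO-AT-DISTANCE$(s,\ell)$ returns $b=\mathrm{true}$ iff $x>s$ (and $b=\mathrm{false}$ iff $x\le s$), while its own cost is $\min(x,s)$. Tracing the agent through the subsequent GO-FIRM calls, each of which walks back across the already-seen nodes and then out to the opposite fault, gives in every branch a cost of the shape ``peek cost $+$ walk to the far fault $+$ return and walk to the near fault''. Concretely: the $n\le 5$ branch costs $2x+y$; in the peek-$1$ branch the $b=\mathrm{true}$ case costs $x+2y+2$ and the $b=\mathrm{false}$ case (which forces $x=1$) costs $y+2$; in the peek-$2$ branch the $b=\mathrm{true}$ case costs $x+2y+4$, while the $b=\mathrm{false}$ case costs $y+2$ when $x=1$ and $y+4$ when $x=2$. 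I will verify each formula by following the three legs of the walk and checking, via the ``fewer than $n$ nodes'' guard, that precisely the component $C$ is explored, so that the guard can matter only for $F=\emptyset$; the empty configuration is then treated separately and found to be dominated.

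With these formulas the overhead in each branch is the maximum of the corresponding ratio over admissible $(x,y)$. The $b=\mathrm{false}$ cases all give ratio $1$ (they occur with $x=1\le y$) except for the single point $x=2,y=1$ in the peek-$2$ branch, which gives $5/4$, so they never dominate. The $b=\mathrm{true}$ cases split on the sign of $x-y$. When $x\ge y$ the ratio $(x+2y+c)/(x+2y)$ is maximized by making $x+2y$ as small as the branch permits, giving the constants $3/2$ (peek-$1$, at $x=2,y=1$) and $9/5$ (peek-$2$, at $x=3,y=1$). When $x<y$ the ratio $(x+2y+c)/(2x+y)$ is increasing in $y$ and decreasing in $x$, so it is maximized at the smallest admissible $x$ (namely $x=2$, resp. $x=3$) together with the largest feasible $y$, which by the single-fault constraint is $y=n-1-x$; this yields $(2n-2)/(n+1)$ for peek-$1$ and $(2n-1)/(n+2)$ for peek-$2$. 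One must also confirm $x<y$ holds at these points, which needs $n\ge 6$ and $n\ge 20$ respectively, exactly the branch conditions.

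Finally I assemble the six ranges by comparing, inside each peek branch, the constant bound against the $n$-dependent bound: $(2n-2)/(n+1)\ge 3/2\iff n\ge 8$ and $(2n-1)/(n+2)\ge 9/5\iff n\ge 24$. Together with the $n\le 5$ branch, whose worst case is a single fault with $x=n-2,\ y=1$, giving $(2n-3)/n$ (equal to $1$ at $n=3$, which explains the separate first row), this produces precisely the stated values on $n=3$, $4\le n\le5$, $6\le n\le7$, $8\le n\le19$, $20\le n\le23$, and $n\ge24$. The main obstacle is not any single computation but the disciplined bookkeeping: getting the small additive constants in the walk costs exactly right, correctly identifying which extreme configuration is worst in each region (in particular remembering that maximizing $y$ forces a single-fault configuration, hence $y=n-1-x$), and verifying that the $F=\emptyset$ and tiny-$(x,y)$ corner cases are genuinely dominated so that each listed maximum is actually attained.
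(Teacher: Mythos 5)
Your proposal is correct and takes essentially the same route as the paper's proof: branch on the algorithm's structure, derive the costs $2x+y$, $2+2y+x$, and $4+2y+x$, compare against $opt(F)=2\min(x,y)+\max(x,y)$ under the constraint $x+y\le n-1$ (equality forcing a single fault), maximize over $(x,y)$ at the same extreme points ($x=2,y=1$; $x=3,y=1$; $x=2,y=n-3$; $x=3,y=n-4$; $x=n-2,y=1$), and check that the empty configuration and the $b=\mathrm{false}$ corner cases are dominated --- indeed you are slightly more explicit than the paper about the $x\le s$ subcases and the monotonicity argument. One trivial nitpick: your threshold equivalences should read $n\ge 7$ and $n\ge 23$ rather than $n\ge 8$ and $n\ge 24$ (equality holds at $n=7$ and $n=23$), but since the competing expressions coincide exactly at those points, the stated overhead values are unaffected.
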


\begin{proof}
First observe that if the fault configuration $F$ is empty, then Algorithm {\tt Ring} has cost $n-1$ if $n \leq 5$,  cost $n$ if $6 \leq n \leq 19$, and cost $n+1$ if $n \geq 20$. 
Hence, for the empty fault configuration $F$,
$\frac{\cC ( \cA, F )}{opt(F)}=1$  if $n \leq 5$, $\frac{\cC ( \cA, F )}{opt(F)}=\frac{n}{n-1}$  if $6 \leq n \leq 19$, and $\frac{\cC ( \cA, F )}{opt(F)}=\frac{n+1}{n-1}$ if $n \geq 20$. 
If $F$ is non-empty then integers $x$ and $y$ are well defined.  
If $x = 0$ or $y = 0$, then $\cal{C}$$(\cA, F) = opt(F)$. Otherwise consider the following cases.

$n = 3$. In this case $\cal{C}$$(\cA, F) = 2x + y = 2y + x = opt(F)$. Hence in this case $\cO _{\cA}=1$.

$4 \leq n \leq 5$. In this case, $\frac{{\cal{C}}(\cA, F)}{opt(F)} = \frac{2x +
y}{min\{2x + y, 2y + x\}}$.
If $x \leq y$, then $\frac{{\cal{C}}(\cA, F)}{opt(F)} = 1$. If $x > y$, then
$\frac{{\cal{C}}(\cA, F)}{opt(F)} = \frac{2x + y}{2y + x}$ and this fraction is maximized for
$x = n - 2$, $y = 1$, giving the value
 $\frac{{\cal{C}}(\cA, F)}{opt(F)} = \frac{2n - 3}{n}$. Since $1 < \frac{2n -
3}{n}$ for $n \geq 4$,  we get $\cO _{\cA} = \frac{2n - 3}{n}$ in this case.

$n = 6$. In this case, $\frac{{\cal{C}}(\cA, F)}{opt(F)} = \frac{2 + 2y +
x}{min\{2x + y, 2y + x\}}$.
If $x < y$, $\frac{{\cal{C}}(\cA, F)}{opt(F)} = \frac{2 + 2y + x}{2x + y}$ and this fraction is maximized for
$x = 2$, $y = 3$, giving the value
 $\frac{{\cal{C}}(\cA, F)}{opt(F)} = \frac{10}{7}$. If $x \geq y$, then
$\frac{{\cal{C}}(\cA, F)}{opt(F)} = \frac{2 + 2y + x}{2y + x}$ and this fraction is maximized for
$x = 2$, $y = 1$, giving the value
 $\frac{{\cal{C}}(\cA, F)}{opt(F)} = \frac{3}{2}$. The ratio $\frac{{\cal{C}}(\cA, F)}{opt(F)}$ for the empty fault configuration is $\frac{6}{5}$ in this case. Since $\frac{10}{7} <
\frac{3}{2}$ and $\frac{6}{5}<\frac{3}{2}$,  we get $\cO _{\cA} = \frac{3}{2}$ in this case.

$n = 7$. In this case, $\frac{{\cal{C}}(\cA, F)}{opt(F)} = \frac{2 + 2y +
x}{min\{2x + y, 2y + x\}}$.
If $x \leq y$, then $\frac{{\cal{C}}(\cA, F)}{opt(F)} = \frac{2 + 2y + x}{2x + y}$ and this fraction is maximized for
 $x = 2$, $y = 4$, giving the value
 $\frac{{\cal{C}}(\cA, F)}{opt(F)} = \frac{3}{2}$. If $x > y$, then
$\frac{{\cal{C}}(\cA, F)}{opt(F)} = \frac{2 + 2y + x}{2y + x}$ and this fraction is maximized for
$x = 2$, $y = 1$, giving the value
 $\frac{{\cal{C}}(A, F)}{opt(F)} = \frac{3}{2}$. The ratio $\frac{{\cal{C}}(\cA, F)}{opt(F)}$ for the empty fault configuration is $\frac{7}{6}$ in this case. Since $\frac{7}{6}<\frac{3}{2}$,  we get
$\cO _{\cA}= \frac{3}{2}$ in this case.

$8 \leq n \leq 19$. In this case, $\frac{{\cal{C}}(\cA, F)}{opt(F)} = \frac{2 + 2y
+ x}{min\{2x + y, 2y + x\}}$.
If $x \leq y$, then $\frac{{\cal{C}}(\cA, F)}{opt(F)} = \frac{2 + 2y + x}{2x + y}$ and this fraction is maximized for
$x = 2$, $y = n - 3$, giving the value
 $\frac{{\cal{C}}(\cA, F)}{opt(F)} = \frac{2n - 2}{n + 1}$. If $x > y$, then
$\frac{{\cal{C}}(\cA, F)}{opt(F)} = \frac{2 + 2y + x}{2y + x}$ and this fraction is maximized for
$x = 2$, $y = 1$, giving the value
 $\frac{{\cal{C}}(\cA, F)}{opt(F)} = \frac{3}{2}$. 
 The ratio $\frac{{\cal{C}}(\cA, F)}{opt(F)}$ for the empty fault configuration is $\frac{n}{n-1}$ in this case.
 Since $\frac{2n - 2}{n + 1}
> \frac{3}{2}$ and $\frac{2n - 2}{n + 1}
> \frac{n}{n-1}$  when $n \geq 8$, we get $\cO _{\cA} = \frac{2n - 2}{n + 1}$ in this case.

$20 \leq n \leq 23$. In this case, $\frac{{\cal{C}}(\cA, F)}{opt(F)} = \frac{4 + 2y
+ x}{min\{2x + y, 2y + x\}}$.
If $x \leq y$, then $\frac{{\cal{C}}(\cA, F)}{opt(F)} = \frac{4 + 2y + x}{2x + y}$ and this fraction is maximized for
$x = 3$, $y = n - 4$, giving the value
 $\frac{{\cal{C}}(\cA, F)}{opt(F)} = \frac{2n - 1}{n + 2}$. If $x > y$, then
$\frac{{\cal{C}}(\cA, F)}{opt(F)} = \frac{4 + 2y + x}{2y + x}$ and this fraction is maximized for
$x = 3$, $y = 1$, giving the value
 $\frac{{\cal{C}}(\cA, F)}{opt(F)} = \frac{9}{5}$. 
The ratio $\frac{{\cal{C}}(\cA, F)}{opt(F)}$ for the empty fault configuration is $\frac{n+1}{n-1}$ in this case. 
 Since $\frac{2n - 1}{n + 2}
\leq \frac{9}{5}$ and $\frac{9}{5}> \frac{n+1}{n-1}$ when $20 \leq n \leq 23$, we get $\cO _{\cA} = \frac{9}{5}$ in this case.

$n \geq 24$. In this case, $\frac{{\cal{C}}(\cA, F)}{opt(F)} = \frac{4 + 2y +
x}{min\{2x + y, 2y + x\}}$.
If $x \leq y$, then $\frac{{\cal{C}}(\cA, F)}{opt(F)} = \frac{4 + 2y + x}{2x + y}$ and this fraction is maximized for
$x = 3$, $y = n - 4$, giving the value
 $\frac{{\cal{C}}(\cA, F)}{opt(F)} = \frac{2n - 1}{n + 2}$. If  $x > y$, then
$\frac{{\cal{C}}(\cA, F)}{opt(F)} = \frac{4 + 2y + x}{2y + x}$ and this fraction is maximized for
$x = 3$, $y = 1$, giving the value
 $\frac{{\cal{C}}(\cA, F)}{opt(F)} = \frac{9}{5}$. 
The ratio $\frac{{\cal{C}}(\cA, F)}{opt(F)}$ for the empty fault configuration is $\frac{n+1}{n-1}$ in this case.  
 Since $\frac{2n - 1}{n + 2}
> \frac{9}{5}$ and  $\frac{2n - 1}{n + 2}>\frac{n+1}{n-1}$  when $n \geq 24$, we get $\cO _{\cA}= \frac{2n - 1}{n + 2}$ in this case.
\end{proof}

We now show that Algorithm {\tt Ring} is perfectly competitive, i.e.,  has the smallest overhead among all exploration
algorithms for the ring, not knowing the fault configuration. We adapt to the ring the following notions from
\cite{MaPe} (where they were defined for the line in a similar way). We denote by $\cal{A}$$_{k}$ the class of
exploration algorithms for the ring which do initially $k$ returns (i.e., changes of direction) assuming that no fault 
 is encountered before the first $k$ returns, then GO-FIRM, and in the case when a fault is met, return and GO-FIRM.
An algorithm
of class $\cal{A}$$_{k}$ is called an $i$-step algorithm, for 
$1\leq i \leq n-2$, if it goes $i$ steps before the first return, unless a fault is encountered earlier. Note that $\cal{A}$$_{0}$ is the class of DFS algorithms (there are two such algorithms for the ring, depending on the initial direction chosen). Also, an $i$-step algorithm of class $\cal{A}$$_{k}$, for $k>0$ and $i \geq n-1$ is in fact a
DFS algorithm, i.e., it is of class $\cal{A}$$_{0}$. On the other hand, a 0-step algorithm of class $\cal{A}$$ _k$ is an algorithm of a class $\cal{A}$$_m$, for some $m<k$, and hence it is enough
to consider only $i$-step algorithms, for strictly positive $i$, in each class.   
Algorithm {\tt Ring} is an algorithm of class 
$\cal{A}$$_{0}$ for
$n \leq 5$, it is a 1-step algorithm of class $\cal{A}$$_{1}$ for $6 \leq n
\leq 19$, and it is a 2-step algorithm of class $\cal{A}$$_{1}$ for $n \geq 20$.

The proof that Algorithm {\tt Ring} is perfectly competitive is split into a series of lemmas.

\begin{lemma}\label{class 0}
Every algorithm of class $\cal{A}$$_{0}$ has overhead at least 
$\frac{2n - 3}{n}$, for all $n \geq 3$.
\end{lemma}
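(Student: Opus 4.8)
The plan is to use the fact that class $\mathcal{A}_0$ consists of exactly two algorithms---the two DFS traversals, one committing first to direction $\ell$ and one first to direction $r$---and to exhibit, for each, a single fault configuration on which the ratio of its cost to $opt$ already reaches $\frac{2n-3}{n}$. By the left--right symmetry of the ring it suffices to treat the DFS that starts in direction $\ell$; the bound for the other follows by reflecting the configuration.

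First I would record the cost of the $\ell$-first DFS on an arbitrary configuration with $x, y \geq 1$ (both ports at $v$ free). Since port $\ell$ at $v$ is free, the algorithm runs GO-FIRM($\ell$), traversing $x$ edges until it meets the first fault; it then returns the $x$ edges to $v$ and runs GO-FIRM($r$), traversing $y$ edges until the first fault in that direction. As these two arcs together with $v$ exhaust the fault-free component, the traversal ends with cost exactly $2x + y$, while $opt(F) = \min\{2x+y,\,2y+x\}$ by the optimal strategy described before Proposition~\ref{ring}.

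Then I would specialize to the single-fault configuration $F$ for which the first fault going $\ell$ is at distance $x = n-2$ and the first fault going $r$ is at distance $y = 1$; concretely, with $v = v_1$ this is the configuration whose only faulty edge is $\{v_2, v_3\}$, so that the fault-free component is the whole ring ($x + y + 1 = n$). Here the DFS cost is $2(n-2) + 1 = 2n-3$, whereas $opt(F) = \min\{2n-3,\,n\} = n$, so the ratio equals $\frac{2n-3}{n}$ and the overhead of this DFS is at least $\frac{2n-3}{n}$. The same computation with the reflected configuration (only faulty edge $\{v_n, v_{n-1}\}$, i.e.\ $x = 1$, $y = n-2$) bounds the overhead of the $r$-first DFS, and since these are the only algorithms in $\mathcal{A}_0$, the lemma follows for all $n \geq 3$.

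The arguments here are entirely routine; the only point requiring a little care is the realizability and bookkeeping of the extremal configuration---checking that $x = n-2,\,y = 1$ can be attained by a single faulty edge in an $n$-ring (it can, since then $x + y = n-1$), and that this forces $opt(F) = n$ rather than $2n-3$ because $x > y$. The degenerate endpoint $n = 3$ (where $x = y = 1$) still yields ratio $1 = \frac{2n-3}{n}$, so no separate treatment is needed.
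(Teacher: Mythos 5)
Your proof is correct and matches the paper's argument: the paper likewise exhibits the single-fault configuration at distance $1$ from $v$ in the direction opposite to the algorithm's initial direction (giving $x = n-2$, $y = 1$ up to symmetry), and computes ${\cal C}(A,F) = 2n-3$ against $opt(F) = n$. Your additional checks (realizability of the configuration, the reflection for the $r$-first DFS, and the $n=3$ endpoint) are sound but just make explicit what the paper leaves implicit.
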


\begin{proof}
Consider the fault configuration $F$ consisting of a single fault situated at distance 1 from the starting node $v$,
in the direction different from the one at which the algorithm $A$ of class $\cal{A}$$_{0}$ starts. Then
$\cal{C}$$(A, F) =2n-3$ and $opt(F)=n$, which concludes the proof.
\end{proof}

\begin{lemma}\label{class 1}
Let $n \geq 3$. Every $(n-2)$-step algorithm of class $\cal{A}$$_{1}$ has overhead $\frac{2n-3}{n-1}$, and  
every $i$-step algorithm of class $\cal{A}$$_{1}$, for $1
\leq i \leq n-3$, has overhead $max\{\frac{n+i-1}{n-1},\frac{3i
+ 3}{i + 3}, \frac{i + 2n - 3}{i + n}\}$.
\end{lemma}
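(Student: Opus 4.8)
The plan is to fix an arbitrary $i$-step algorithm $A$ of class $\mathcal{A}_1$ and compute $\cC(A,F)$ exactly as a function of $x$ and $y$, then maximize the ratio $\cC(A,F)/opt(F)$ over all configurations. By the symmetry $x\leftrightarrow y$ of both the family of fault configurations and of $opt(F)=\min\{2x+y,2y+x\}$, I may assume $A$ makes its initial probe in direction $\ell$. First I would dispose of the degenerate configurations exactly as in the proof of Proposition \ref{ring}: when $x=0$ or $y=0$ the algorithm reduces to a single GO-FIRM in the free direction, so $\cC(A,F)=opt(F)$ and the ratio is $1$; and for the empty configuration, tracing the probe followed by GO-FIRM (using the provision that halts after $n$ nodes are visited) gives $\cC(A,F)=n+i-1$ against $opt(F)=n-1$, contributing the ratio $\frac{n+i-1}{n-1}$.

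The heart of the argument is the cost formula for $x,y\ge 1$. I would trace the agent: it walks $\min\{x,i\}$ steps along $\ell$, changes direction, runs GO-FIRM($r$) to the right boundary, and, if nodes remain, returns and runs GO-FIRM($\ell$) to the left boundary. This yields two regimes. If $x\le i$ the probe already reaches (or is stopped at) the left boundary, so one rightward sweep completes the exploration and $\cC(A,F)=2x+y$; the boundary value $x=i$ must be checked to confirm it lands here. If $x>i$ the probe stops short, so after the rightward sweep a full leftward sweep is still needed and $\cC(A,F)=2i+2y+x$.

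Next I would maximize the ratio on each regime. On $x\le i$ the ratio is $1$ when $x\le y$ and $\frac{2x+y}{2y+x}$ when $x>y$, whose maximum $\frac{2i+1}{i+2}$ (at $x=i$, $y=1$) satisfies $\frac{2i+1}{i+2}<\frac{3i+3}{i+3}$, so this regime is never binding. On $x>i$ with $x>y$ the ratio equals $1+\frac{2i}{2y+x}$, maximized by minimizing $2y+x$ at $x=i+1$, $y=1$, giving $\frac{3i+3}{i+3}$; here the constraint $x+y\le n-1$ is exactly what forces $i\le n-3$. On $x>i$ with $x\le y$ the ratio $\frac{x+2y+2i}{2x+y}$ is increasing in $y$ (since $3x-2i>0$) and, on the binding edge $x+y=n-1$, decreasing in $x$, so its maximum sits at the corner $x=i+1$, $y=n-i-2$, giving $\frac{i+2n-3}{i+n}$, provided this corner satisfies $x\le y$, i.e. $i\le(n-3)/2$.

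The delicate point, and the step I expect to be the main obstacle, is to show that the stated maximum of three quantities is exactly the overhead even though the third regime can be empty. I would verify that whenever $i>(n-3)/2$ (so the corner above is infeasible) one has $\frac{i+2n-3}{i+n}<\frac{3i+3}{i+3}$, so the phantom third term never exceeds an achievable one and the formula stays correct; conversely, when it is the largest term it is realized by the configuration $(i+1,n-i-2)$. Collecting the three achievable maxima then gives overhead $\max\{\frac{n+i-1}{n-1},\frac{3i+3}{i+3},\frac{i+2n-3}{i+n}\}$ for $1\le i\le n-3$. Finally, for $i=n-2$ the regime $x>i$ is infeasible since $y\ge 1$ forces $x\le n-2$, so only the first regime and the empty configuration survive; the former contributes at most $\frac{2n-3}{n}$ and the latter $\frac{2n-3}{n-1}$, and as the empty-configuration ratio dominates, the overhead is exactly $\frac{2n-3}{n-1}$, matching the separate claim for $(n-2)$-step algorithms.
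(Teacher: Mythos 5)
Your proposal is correct and follows essentially the same route as the paper's proof: the same case decomposition ($x\le i$; $x>i$ with $x\ge y$; $x>i$ with $x<y$), the same cost formulas $2x+y$ and $2i+2y+x$, the same maximizing configurations $(i,1)$, $(i+1,1)$, $(i+1,n-i-2)$, and the same observation that $\frac{2i+1}{i+2}<\frac{3i+3}{i+3}$ eliminates the first regime. Your additional verification that the third regime's corner is feasible exactly when the regime is nonempty, and that the ``phantom'' term $\frac{i+2n-3}{i+n}$ is dominated by $\frac{3i+3}{i+3}$ when $i>(n-3)/2$, is a correct patch of a detail the paper leaves implicit, not a different method.
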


\begin{proof}
First consider an $(n-2)$-step algorithm $A$ of class $\cal{A}$$_{1}$, and a fault configuration $F$.
If $F$ is empty, then ${\cal{C}}(A,F)=2n-3$ and $opt(F)=n-1$, giving the ratio $\frac{2n-3}{n-1}$.
Otherwise, $x$ and $y$ are well defined. If $x=0$ or $y=0$, then ${\cal{C}}(A, F)$ = $opt(F)$.
Assume that $x>0$ and $y>0$. Without loss of
generality we may assume that in the first step the algorithm takes port $\ell$. 
We have $x \leq n-2$, and hence $\frac{{\cal{C}}(A, F)}{opt(F)} = \frac{2x + y}{min\{2x + y,
2y + x\}}$. This fraction is maximized for $x = n-2$ and $y = 1$, and then its value is $\frac{2n-3}{n}$.
Since $\frac{2n-3}{n-1}>\frac{2n-3}{n}$, we get ${\cal O}_A= \frac{2n-3}{n-1}$.

In the rest of the proof we consider an $i$-step algorithm $A$ of class $\cal{A}$$_{1}$, for $1
\leq i \leq n-3$, and a fault configuration $F$.
If $F$ is empty, then ${\cal{C}}(A,F)=n+i-1$ and $opt(F)=n-1$, giving the ratio
of $\frac{n+i-1}{n-1}$.
Otherwise, $x$ and $y$ are well defined. If $x=0$ or $y=0$, then ${\cal{C}}(A, F)$ = $opt(F)$. Assume
that $x>0$ and $y>0$. Without loss of
generality we may assume that in the first step the algorithm takes port $\ell$.

If $x \leq i$ then $\frac{{\cal{C}}(A, F)}{opt(F)} = \frac{2x + y}{min\{2x + y,
2y + x\}}$, and this fraction is maximized for $x = i$ and $y = 1$. Hence the maximum value of
$\frac{{\cal{C}}(A, F)}{opt(F)}$ is $\frac{2i + 1}{i + 2}$ in this case.

If $x > i$ and $ x \geq y$ then $\frac{{\cal{C}}(A, F)}{opt(F)} = \frac{2i + 2y
+ x}{2y + x}$, and this fraction is maximized for $x = i + 1$ and $y = 1$.
 Hence the maximum value of
$\frac{{\cal{C}}(A, F)}{opt(F)}$ is $\frac{3i + 3}{i + 3}$ in this case.

If $x > i$ and $ x < y$ then 
$\frac{{\cal{C}}(A, F)}{opt(F)} = \frac{2i + 2y + x}{2x + y}$, and this fraction is maximized for $x = i + 1$ and
$y = n - i - 2$. Hence the maximum value of $\frac{{\cal{C}}(A, F)}{opt(F)}$ is 
$\frac{i + 2n - 3}{i + n}$ in this case.

Since $\frac{3i + 3}{i + 3} > \frac{2i + 1}{i + 2}$, il follows that
${\cal O}_A = \max\{\frac{n+i-1}{n-1},\frac{3i + 3}{i + 3}, \frac{i + 2n - 3}{i
+ n}\}$.
\end{proof}

Lemma \ref{class 1} gives the following corollary.

\begin{corollary}\label{cor} 
Every 1-step algorithm of class $\cal{A}$$_{1}$ has overhead:

\begin{tabular}{rll}
	$\bullet$ & $\frac{3}{2}$ 			& for\ \ $3 \leq n \leq 7$ \\
	\\
	$\bullet$ & $\frac{2n - 2}{n + 1}$  & for\ \ $n > 7$ \\
\end{tabular}

Every 2-step algorithm of class $\cal{A}$$_{1}$ has overhead: 

\begin{tabular}{rll}
	$\bullet$ & {$\frac{5}{3}$}  			& {for\ \ $n = 4$} \\
	\\
	$\bullet$ & $\frac{9}{5}$  			& for\ \ $5 \leq n \leq 23$ \\
	\\
	$\bullet$ & $\frac{2n - 1}{n + 2}$  & for\ \ $n > 23$ \\	
\end{tabular} 
\end{corollary}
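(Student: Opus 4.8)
The plan is to apply Lemma~\ref{class 1} directly, substituting the appropriate step value $i$ into the overhead formula and then simplifying the resulting maximum of three fractions as a function of $n$.

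For the 1-step case I would set $i=1$. First I would dispose of the boundary: when $n=3$ we have $i=1=n-2$, so the algorithm is an $(n-2)$-step algorithm and the first part of Lemma~\ref{class 1} gives overhead $\frac{2n-3}{n-1}=\frac{3}{2}$, matching the claim. For $n\geq 4$ the step value $i=1$ lies in the range $1\le i\le n-3$, so the second part of the lemma applies and the overhead equals
\[
\max\left\{\frac{n}{n-1},\ \frac{3}{2},\ \frac{2n-2}{n+1}\right\}.
\]
I would then note that $\frac{n}{n-1}$ is decreasing in $n$ with value $\frac{3}{2}$ at $n=3$, hence $\frac{n}{n-1}\le\frac{3}{2}$ for all $n\ge 3$ and this term never determines the maximum. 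It remains to compare the constant $\frac{3}{2}$ with $\frac{2n-2}{n+1}$; clearing denominators shows $\frac{2n-2}{n+1}\ge\frac{3}{2}$ iff $n\ge 7$, with equality exactly at $n=7$. This yields overhead $\frac{3}{2}$ for $3\le n\le 7$ and $\frac{2n-2}{n+1}$ for $n>7$.

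The 2-step case is entirely analogous with $i=2$. The boundary $n=4$ gives $i=2=n-2$, so again the first clause of the lemma applies and the overhead is $\frac{2n-3}{n-1}=\frac{5}{3}$. For $n\ge 5$ we have $1\le 2\le n-3$, and the overhead is
\[
\max\left\{\frac{n+1}{n-1},\ \frac{9}{5},\ \frac{2n-1}{n+2}\right\}.
\]
Here $\frac{n+1}{n-1}$ is decreasing with value $\frac{3}{2}<\frac{9}{5}$ already at $n=5$, so it is always dominated, and comparing $\frac{9}{5}$ with $\frac{2n-1}{n+2}$ gives $\frac{2n-1}{n+2}\ge\frac{9}{5}$ iff $n\ge 23$, with equality at $n=23$. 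This produces overhead $\frac{9}{5}$ for $5\le n\le 23$ and $\frac{2n-1}{n+2}$ for $n>23$, completing the corollary.

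Since the whole argument is just substitution plus three elementary monotonicity-and-crossover comparisons, there is no deep obstacle; the only points requiring care are the two boundary values $n=3$ (1-step) and $n=4$ (2-step), where the step index coincides with $n-2$ and one must invoke the first clause of Lemma~\ref{class 1} rather than the maximum formula, and the verification that the crossover inequalities hold with equality exactly at $n=7$ and $n=23$ so that the piecewise boundaries are placed correctly.
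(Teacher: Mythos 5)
Your proof is correct and takes exactly the route the paper intends: the paper states Corollary \ref{cor} as an immediate consequence of Lemma \ref{class 1} without writing out the verification, and your argument supplies precisely that omitted computation. Your handling of the boundary cases $n=3$ and $n=4$ via the $(n-2)$-step clause of Lemma \ref{class 1}, and the crossover comparisons showing $\frac{2n-2}{n+1}\geq\frac{3}{2}$ iff $n\geq 7$ and $\frac{2n-1}{n+2}\geq\frac{9}{5}$ iff $n\geq 23$, are all accurate.
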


Notice that for $3\leq i \leq n-3$, we have $\frac{3i + 3}{i + 3} \geq 2$.
Moreover,  $\frac{2n-3}{n-1}
\geq \frac{3}{2}$ when $5 \leq n \leq 7$, and $\frac{2n-3}{n-1} \geq \frac{2n -
2}{n + 1}$ when $n > 7$ ($i$ would be less than 3 for $n < 5$).
Hence it follows from Lemma \ref{class 1} and from Corollary \ref{cor} that the
overhead of any $i$-step algorithm of class $\cal{A}$$_{1}$, for $i \geq 3$, is
larger than the overhead of a $1$-step algorithm of class $\cal{A}$$_{1}$.
Hence, when looking for perfectly competitive algorithms in class
$\cal{A}$$_{1}$ we may restrict attention to $i$-step algorithms for $i \leq 2$.
 
It was observed in \cite{MaPe} that if the agent, at some point of the exploration, is at node 
$w$, then moves along an already traversed edge incident to $w$, and immediately returns 
to $w$, then, for any fault configuration, an algorithm causing such a pair of moves has cost 
strictly larger than the algorithm that skips these two moves. Hence, while looking for a perfectly competitive algorithm,
we may restrict attention to algorithms that do not perform such unnecessary moves. We call such algorithms {\em regular}.
It is easy to see that any regular exploration algorithm on a ring falls in one of the classes $\cal{A}$$_{k}$.   

The following lemma is proved similarly as Lemma 2.5 from \cite{MaPe}
which is its analog for the line;  we include the proof for completeness. Together with the above remarks, it shows that when looking 
for perfectly competitive exploration algorithms on rings we may restrict attention to algorithms
of class $\cal{A}$$_{0}$
and to $1$-step and $2$-step algorithms of class $\cal{A}$$_{1}$.

\begin{lemma}\label{old}
For every  exploration algorithm of class $\cal{A}$$_{k}$, for  $k \geq 2$, there exists an algorithm of class $\cal{A}$$_{0}$
or an $i$-step algorithm of class $\cal{A}$$_{1}$, for  $i \leq 2$, with a smaller or equal overhead.
\end{lemma}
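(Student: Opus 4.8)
The plan is to show that the overhead of every algorithm $A$ of class ${\cal A}_{k}$ with $k\geq 2$ is at least $\cO_{\cA}$, the overhead of Algorithm {\tt Ring} computed in Proposition \ref{ring}; since {\tt Ring} is itself of class ${\cal A}_{0}$ or a $1$-step or $2$-step algorithm of class ${\cal A}_{1}$, it is the required dominating algorithm. First I would normalize $A$: without loss of generality its first move takes port $\ell$, and I denote by $a\geq 1$ and $b\geq 1$ the depths of its first two excursions, i.e. the distance to the left of its first return and the distance to the right of its second return (both exist since $k\geq 2$). As in the proofs of Proposition \ref{ring} and Lemma \ref{class 1}, I would describe every fault configuration by the pair $(x,y)$ of distances from $v$ to the nearest fault on the left and on the right, with $opt(F)=2\min\{x,y\}+\max\{x,y\}$ when $x,y\geq 1$. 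Since $\cO_A$ is a maximum over all configurations, to bound it from below it suffices to exhibit a single configuration with $x,y\geq 1$ realizing a large ratio.

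The heart of the argument is to exhibit, for each such $A$, one configuration whose ratio is at least $\cO_{\cA}$. The generic bad configuration places both nearest faults strictly beyond the turning points reached by $A$, so that $A$ first performs its entire initial zig-zag (wasted motion of length at least $2a+2b$ when $k=2$, and strictly more for additional or deeper returns), and only afterwards commits to a direction, meets a fault, turns once more, and sweeps the long remaining arc. Choosing the faults so that this long arc, of length $x$, lies on the side opposite to the final commit and with $x\geq y$, the cost of $A$ contains the term $2x$ while $opt(F)=2y+x$ contains $x$ only once, so the ratio tends to $2$ as $x$ grows. For the extreme case $a=b=1$, $k=2$, the configuration $(x,y)=(n-3,2)$ already yields cost $2n$ against $opt(F)=n+1$, hence ratio $\frac{2n}{n+1}$, and deeper or more numerous returns only enlarge the numerator.

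It then remains to check $\frac{2n}{n+1}\geq \cO_{\cA}$ in every range of Proposition \ref{ring}. The only non-immediate cases are $20\leq n\leq 23$, where $\frac{2n}{n+1}\geq \frac{9}{5}$ reduces to $n\geq 9$, and $n\geq 24$, where $\frac{2n}{n+1}>\frac{2n-1}{n+2}$ follows from $2n(n+2)-(2n-1)(n+1)=3n+1>0$; the remaining ranges, and $n\leq 5$, are verified directly (for $n=3$ the conclusion is trivial, every overhead being at least $1$). This establishes $\cO_A\geq \cO_{\cA}$, and taking for the dominating algorithm Algorithm {\tt Ring} itself proves the lemma.

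The hard part will be making ``deeper or more numerous returns only enlarge the numerator'' rigorous uniformly in $k$ and in the turn depths, because the bad configuration must be adapted to $A$: a near fault on the far side can short-circuit a deep excursion, since if the deepest turn has depth $T$ one needs the opposite fault just beyond it, forcing $x\leq n-T-2$ and destroying $x\geq y$ once $T>(n-3)/2$. I would handle this exactly as in Lemma 2.5 of \cite{MaPe}: split according to the depth of the deepest excursion on the side of the final commit, place the blocking fault just past it so that $A$ is forced to execute all its intended returns before the long backtrack, and treat the regime of a very deep excursion by the complementary configuration in which that single excursion is itself the wasted traversal. In each sub-case one checks, by the same elementary manipulation of $\frac{2i+2y+x}{\min\{2x+y,\,2y+x\}}$-type fractions used in Lemma \ref{class 1} and Corollary \ref{cor}, that the worst-case ratio stays at or above $\cO_{\cA}$.
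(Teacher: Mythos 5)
Your proposal is correct in strategy and, at its engine, is the paper's proof repackaged: the paper's argument (explicitly modeled on Lemma 2.5 of \cite{MaPe}, the very technique you defer to for your hard case) uses exactly the adapted configurations you describe, placing the two nearest faults at distances $z_{k-1}+1$ and $z_k+1$ just beyond the last two turning depths and splitting on $z_k \geq z_{k-1}$ versus $z_k < z_{k-1}$ to handle precisely the deep-excursion regime you flag. The genuine difference is the benchmark. You aim at the single inequality $\cO_A \geq \cO_{\cA}$, with Algorithm {\tt Ring} as the universal dominating algorithm; the paper never compares to $\cO_{\cA}$ at all. Instead, whenever $2(z_1+\cdots+z_{k-2})+z_k \geq 3$ it shows the exhibited configuration forces ratio at least $2$ \emph{exactly}, and dominates $A$ by any algorithm of class ${\cal A}_0$, whose overhead $\frac{2n-3}{n}$ is strictly below $2$ --- a range-free comparison needing no case analysis on $n$. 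Only in the residual case $k=2$, $z_2 \leq 2$ does it dominate $A$ by the $z_2$-step algorithm of class ${\cal A}_1$, matching each of the three terms $\frac{n+z_2-1}{n-1}$, $\frac{3z_2+3}{z_2+3}$, $\frac{z_2+2n-3}{z_2+n}$ of that algorithm's overhead (Lemma \ref{class 1}) by a separate configuration, your $(x,y)=(n-z_2-2,\,z_2+1)$ among them. Your route is non-circular (it uses only Proposition \ref{ring}, which is independent of the lemma) and proves the stronger statement $\cO_A \geq \cO_{\cA}$, effectively absorbing part of Proposition \ref{lb}; the cost is the five-range arithmetic you sketch, with thin but positive margins, e.g. $\frac{2n}{n+1}-\frac{2n-1}{n+2}=\frac{3n+1}{(n+1)(n+2)}$. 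One caution: since $\cO_{\cA}=\frac{2n-1}{n+2}$ itself tends to $2$, your phrase ``the ratio tends to $2$ as $x$ grows'' cannot carry the general case --- you need the exact bound ratio $\geq 2$ (or exact per-configuration fractions), which is what the \cite{MaPe}-style fault placement delivers; so your deferred hard case, executed as promised, lands on the paper's own proof. Your concrete computations ($a=b=1$, $k=2$ giving cost $2n$ against $opt(F)=n+1$, and the range checks) are all correct.
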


\begin{proof}
Let $A$ be any algorithm of class $\cal{A}$$_{k}$, for $k \geq 2$. Without loss of generality, 
assume that the agent starts by taking port $\ell$. 
The behavior of $A$ can be
described as follows:

\qquad traverse $z_{1}$ edges going in one direction;

\qquad return and traverse $z_{1} + z_{2}$ edges;

\qquad return and traverse $z_{2} + z_{3}$ edges;

\qquad ...

\qquad return and traverse $z_{k - 2} + z_{k - 1}$ edges;

\qquad return and traverse $z_{k - 1} + z_{k}$ edges;

\qquad return and GO-FIRM;

\qquad return and GO-FIRM;

for $z_{1} < z_{3} < z_{5} < ...$ and $z_{2} < z_{4} < z_{6} < ...$, provided
that the closest faults to the starting node $v$ (if any) are at distance larger
than $z_{k - 1}$ of $v$ on the side where the agent has traversed $z_{k - 1}$
edges,  and at distance larger than $z_{k }$ of $v$ on the side where the agent
has traversed $z_{k }$ edges. This implies that $z_{k - 1} + z_{k} $
{ $\leq $} $ n - 2$.

First consider a non-empty fault configuration $F$ for which the closest fault
from $v$ in one direction is at distance $z_{k - 1} + 1$ from it, and the
closest fault from $v$ in the other direction is at distance $z_{k} + 1$ from it
(this could be the same fault). We have ${\cal{C}}(A, F) = 2(z_{1} + ... +
z_{k}) + (2z_{k - 1} + 2) + (z_{k} + 1) = 2(z_{1} + ... + z_{k}) + 2z_{k - 1} +
z_{k} + 3$.

If $z_{k} \geq z_{k - 1}$, we have $opt(F) =  2z_{k - 1} + z_{k} + 3$.
Hence $\frac{{\cal{C}}(A, F)}{opt(F)} =
\frac{2(z_{1} + ... + z_{k}) + 2z_{k - 1} + z_{k} + 3}{2z_{k - 1} + z_{k} + 3}$.
If additionally $2(z_{1} + ... + z_{k}) - 2z_{k - 1} - z_{k} \geq 3$, it follows that
$\frac{{\cal{C}}(A,P)}{opt(F)} \geq 2$ in this case.

If $z_{k} < z_{k - 1}$ then  $2(z_{1} + ... + z_{k - 1}) \geq z_{k} + 3$
and $opt(A)= 2z_{k} + z_{k - 1} + 3$. Hence
$\frac{{\cal{C}}(A, F)}{opt(F)} =
\frac{2(z_{1} + ... + z_{k}) + 2z_{k - 1} + z_{k} + 3}{2z_{k} + z_{k - 1} + 3}
\geq 2$ in this case.

Since any algorithm $A^{\prime}$ of class $\cal{A}$$_{0}$ has an overhead
$\cal{O}$$(A^{\prime}) < 2$, we conclude that in any of the above cases, the algorithm
$A^{\prime}$ has an overhead smaller than the algorithm $A$.

If none of the above cases holds (i.e., if we have $z_{k} \geq z_{k - 1}$ and
$2(z_{1} + ... + z_{k - 2}) + z_{k} < 3$), then the only possibility is $k=2$
and $z_{2} \leq 2$. In this case consider a $z_{2}$-step algorithm $A^{\prime
\prime}$ of class $\cal{A}$$_{1}$. The overhead of algorithm $A^{\prime\prime}$
is $\cal{O}$$(A^{\prime \prime}) = \max\{\frac{n + z_{2} - 1}{n - 1},
\frac{3z_{2} + 3}{z_{2} + 3}, \frac{z_{2} + 2n - 3}{z_{2} + n}\}$ {when $z_{2} \leq n-3$, and $\cal{O}$$(A^{\prime \prime}) = \frac{n+z_{2}-1}{n-1}$
when $z_{2} = n-2$}, by Lemma 2.2.

We have $\frac{{\cal{C}}(A, F)}{opt(F)} = \frac{4z_1+3z_2+ 3}{2z_1 + z_{2} + 3} \geq
\frac{3z_{2} + 3}{z_{2} + 3}$ {$\geq \frac{n+z_{2}-1}{n-1}$},
because $z_2 \leq 2$.

Let $F_1$ be the fault
configuration consisting of one fault for which $y = z_{2} + 1$ and $x = n - z_{2} - 2$.
We have ${\cal{C}}(A, F_1) =2z_1+3z_2+2x+1$ and $opt(F_1)=\min\{2z_2+x+2, 2x+z_2+1\}$. 
Hence ${\cal{C}}(A, F_1) \geq z_2+2n-3$ and $opt(F_1) \leq z_2+n$, and thus 
$\frac{{\cal{C}}(A, F_1)}{opt(F_1)}  \geq \frac{z_{2} + 2n - 3}{z_{2} + n}$.

Finally, consider the empty fault configuration $\emptyset$.
We have ${\cal{C}}(A, \emptyset) =2z_1+ z_2+n-1 \geq n+z_2-1$ and $opt(\emptyset)=n-1$.
Hence $\frac{{\cal{C}}(A, \emptyset)}{opt(\emptyset)}  \geq \frac{n
+ z_{2} - 1}{n - 1}$.

It follows that in the case when $z_{k} \geq z_{k - 1}$ and $2(z_{1} + ... +
z_{k - 2}) + z_{k} < 3$, we have $\cal{O}$$(A^{\prime \prime}) \leq
{\cal{O}}(A)$. This proves the lemma.
\end{proof}

\begin{proposition}\label{lb}
Every exploration algorithm of the ring has overhead at least:
$$
\begin{tabular}{rll}
	$\bullet$ & $\frac{2n - 3}{n}$ 		& for\ \ $4 \leq n \leq 5$ \\
	\\
	$\bullet$ & $\frac{3}{2}$  			& for\ \ $6 \leq n \leq 7$ \\
	\\
	$\bullet$ & $\frac{2n - 2}{n + 1}$  & for\ \ $8 \leq n \leq 19$ \\	
	\\
	$\bullet$ & $\frac{9}{5}$  			& for\ \ $20 \leq n \leq 23$ \\	
	\\
	$\bullet$ & $\frac{2n - 1}{n + 2}$  & for\ \ $n \geq 24$ \\		
\end{tabular} 
$$
\end{proposition}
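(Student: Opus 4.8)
The plan is to combine the structural reductions established above with the overhead values already computed for the three surviving families of algorithms, and then to take, range by range, the minimum of those values. The overhead of \emph{every} algorithm will be bounded below by this minimum, because the reductions show that any algorithm can be matched by one of these three types without increasing overhead.

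First I would recall the reductions. By the regularity observation preceding Lemma \ref{old}, any exploration algorithm $A$ admits a regular algorithm $A'$ with $\mathcal{O}_{A'} \le \mathcal{O}_A$, and any regular algorithm on the ring belongs to some class $\mathcal{A}_k$. Lemma \ref{old} replaces every class $\mathcal{A}_k$ with $k \ge 2$ by an algorithm of class $\mathcal{A}_0$ or an $i$-step algorithm of class $\mathcal{A}_1$ with $i \le 2$, of no larger overhead; the remarks after Corollary \ref{cor} discard $i$-step algorithms of class $\mathcal{A}_1$ with $i \ge 3$, whose overhead is at least that of a $1$-step algorithm; and $0$-step algorithms of class $\mathcal{A}_k$ collapse into lower classes. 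Hence, for any algorithm $A$, there is an algorithm $B$ lying in class $\mathcal{A}_0$, or a $1$-step or $2$-step algorithm of class $\mathcal{A}_1$, with $\mathcal{O}_A \ge \mathcal{O}_B$. It therefore suffices to bound the overhead from below within these three families.

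Next I would collect the relevant overheads: Lemma \ref{class 0} gives that every algorithm of class $\mathcal{A}_0$ has overhead at least $\frac{2n-3}{n}$, and Corollary \ref{cor} gives the exact overheads of $1$-step and $2$-step algorithms of class $\mathcal{A}_1$ in each range. Writing $X$ for the minimum of these three quantities, every $B$ above satisfies $\mathcal{O}_B \ge X$, so $\mathcal{O}_A \ge X$ for all $A$, which is the desired bound. It then remains to evaluate $X$ by a finite collection of elementary fraction comparisons. For example, for $8 \le n \le 19$ one checks $\frac{2n-2}{n+1} < \frac{2n-3}{n}$ (equivalent to $n > 3$) and $\frac{2n-2}{n+1} \le \frac{9}{5}$ (equivalent to $n \le 19$), so the $1$-step value $\frac{2n-2}{n+1}$ is smallest; and for $n \ge 24$ one checks $\frac{2n-1}{n+2} < \frac{2n-2}{n+1} < \frac{2n-3}{n}$, so the $2$-step value $\frac{2n-1}{n+2}$ dominates. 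The remaining ranges are handled by the analogous comparisons, reproducing the tabulated values.

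No step is genuinely hard here; the reasoning is essentially bookkeeping. The one point requiring care is the logical direction: to establish a lower bound holding for \emph{every} algorithm, I compute the \emph{smallest} overhead across the candidate classes --- the overhead of the best possible algorithm --- and rely on the reductions to guarantee that nothing does better. A secondary point is the treatment of boundaries where two candidate fractions coincide (such as $\frac{2n-2}{n+1} = \frac{9}{5}$ at $n = 19$), which must be assigned consistently. Comparing the resulting values with those of Proposition \ref{ring} shows that they agree in every range, which is exactly the assertion that Algorithm {\tt Ring} is perfectly competitive.
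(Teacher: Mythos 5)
Your proposal is correct and takes essentially the same route as the paper's own proof: both use Lemma~\ref{old} (together with the regularity remarks and the elimination of $i$-step algorithms of class $\mathcal{A}_1$ with $i \geq 3$) to restrict attention to class $\mathcal{A}_0$ and to $1$-step and $2$-step algorithms of class $\mathcal{A}_1$, and then take, range by range, the minimum of the bounds supplied by Lemma~\ref{class 0} and Corollary~\ref{cor}. The fraction comparisons you sketch (e.g., $\frac{2n-2}{n+1} < \frac{2n-3}{n}$ for $n>3$ and $\frac{2n-1}{n+2} < \frac{2n-2}{n+1} < \frac{2n-3}{n}$ for $n \geq 24$) are exactly those performed in the paper, so nothing is missing.
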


\begin{proof}
By Lemma \ref{old} we can restrict attention to algorithms of class $\cal{A}$$_{0}$
and $i$-step algorithms of class $\cal{A}$$_{1}$, for  $i \leq 2$.

Let $4 \leq n \leq 5$. By Lemma \ref{class 0}, every algorithm of class $\cal{A}$$_{0}$ has overhead
at least $\frac{2n - 3}{n}$. By Corollary \ref{cor}, every $1$-step algorithm of class $\cal{A}$$_{1}$
has overhead at least $\frac{3}{2}$ and
every $2$-step algorithm of class $\cal{A}$$_{1}$
has overhead at least 
 $\frac{5}{3}$.
Since $\frac{2n - 3}{n} \leq \frac{3}{2} < \frac{5}{3}$, the overhead of any exploration algorithm is at least
$\frac{2n - 3}{n}$ in this case.

Let $6 \leq n \leq 7$. By Lemma \ref{class 0}, every algorithm of class $\cal{A}$$_{0}$ has overhead
at least $\frac{2n - 3}{n}$. By Corollary \ref{cor}, every $1$-step algorithm of class $\cal{A}$$_{1}$
has overhead at least $\frac{3}{2}$ and
every $2$-step algorithm of class $\cal{A}$$_{1}$
has overhead at least $\frac{9}{5}$. Since $\frac{3}{2} \leq \frac{2n -
3}{n} < \frac{9}{5}$, the overhead of any exploration algorithm is at least $\frac{3}{2}$ in this case.

Let $8 \leq n \leq 19$. By Lemma \ref{class 0}, every algorithm of class $\cal{A}$$_{0}$ has overhead
at least  $\frac{2n - 3}{n}$.
By Corollary \ref{cor}, every $1$-step algorithm of class $\cal{A}$$_{1}$
has overhead at least
$\frac{2n - 2}{n + 1}$ and
every $2$-step algorithm of class $\cal{A}$$_{1}$
has overhead at least
 $\frac{9}{5}$.
Since $\frac{2n - 2}{n + 1} < \frac{2n - 3}{n} \leq \frac{9}{5}$ for $8 \leq n
\leq 15$ and $\frac{2n - 2}{n + 1} \leq \frac{9}{5} < \frac{2n - 3}{n}$ for $16
\leq n \leq 19$, the overhead of any exploration algorithm is at least $\frac{2n - 2}{n + 1}$ in this case.

Let $20 \leq n \leq 23$. By Lemma \ref{class 0}, every algorithm of class $\cal{A}$$_{0}$ has overhead
at least  $\frac{2n - 3}{n}$. By Corollary \ref{cor}, every $1$-step algorithm of class $\cal{A}$$_{1}$
has overhead at least $\frac{2n - 2}{n + 1}$ and
every $2$-step algorithm of class $\cal{A}$$_{1}$
has overhead at least $\frac{9}{5}$.
Since $\frac{9}{5} < \frac{2n - 2}{n + 1} < \frac{2n - 3}{n}$, 
the overhead of any exploration algorithm is at least $\frac{9}{5}$ in this case.

Let $n \geq 24$. By Lemma \ref{class 0}, every algorithm of class $\cal{A}$$_{0}$ has overhead
at least  $\frac{2n - 3}{n}$. 
By Corollary \ref{cor}, every $1$-step algorithm of class $\cal{A}$$_{1}$
has overhead at least $\frac{2n - 2}{n + 1}$
and
every $2$-step algorithm of class $\cal{A}$$_{1}$
has overhead at least $\frac{2n - 1}{n + 2}$.  Since $\frac{2n - 1}{n +
2} < \frac{2n - 2}{n + 1} < \frac{2n - 3}{n}$, 
the overhead of any exploration algorithm is at least  $\frac{2n - 1}{n + 2}$ in this case.
\end{proof}

Propositions \ref{ring} and  \ref{lb} imply:

\begin{theorem}\label{perfect}
Algorithm {\tt Ring} is perfectly competitive for an arbitrary ring.
\end{theorem}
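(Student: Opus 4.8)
The plan is to observe that this theorem is an immediate consequence of matching the upper bound of Proposition~\ref{ring} against the lower bound of Proposition~\ref{lb}. By definition, an algorithm is perfectly competitive precisely when its overhead equals the infimum of overheads over all exploration algorithms that do not know the fault configuration. Proposition~\ref{lb} supplies a lower bound on the overhead of \emph{every} such algorithm, while Proposition~\ref{ring} computes the exact overhead of Algorithm {\tt Ring}. So the entire argument reduces to a term-by-term comparison of the two piecewise expressions across the ranges of $n$.

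First I would handle the generic case $n \geq 4$. For each of the five intervals $4 \leq n \leq 5$, $6 \leq n \leq 7$, $8 \leq n \leq 19$, $20 \leq n \leq 23$, and $n \geq 24$, the value given by Proposition~\ref{ring} for the overhead of Algorithm {\tt Ring} coincides exactly with the lower bound stated in Proposition~\ref{lb} (respectively $\frac{2n-3}{n}$, $\frac{3}{2}$, $\frac{2n-2}{n+1}$, $\frac{9}{5}$, and $\frac{2n-1}{n+2}$). Since no exploration algorithm can have overhead below the bound of Proposition~\ref{lb}, and Algorithm {\tt Ring} attains it, Algorithm {\tt Ring} has the smallest possible overhead for every $n \geq 4$, which is exactly the definition of perfect competitiveness.

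It remains to treat the single case $n = 3$, which is not covered by Proposition~\ref{lb}. Here the argument is even simpler: Proposition~\ref{ring} gives overhead $1$ for Algorithm {\tt Ring}, and since the cost of any exploration algorithm on any fault configuration is at least the optimal cost $opt(F)$, the overhead $\mathrm{max}_{F}\,\mathcal{C}(A,F)/opt(F)$ of every algorithm is at least $1$. Thus $1$ is the absolute minimum achievable, and Algorithm {\tt Ring} is trivially perfectly competitive for $n = 3$ as well.

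There is no real obstacle here; the substantive work is entirely contained in the preceding propositions and lemmas, so the proof is just the bookkeeping of confirming that the computed overhead equals the lower bound on each interval, together with the trivial observation about the $n=3$ case. The only point requiring a moment of care is noticing that Proposition~\ref{lb} is stated only for $n \geq 4$, so the $n = 3$ case must be argued separately by the elementary fact that overhead is always at least $1$.
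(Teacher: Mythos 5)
Your proof is correct and takes essentially the same route as the paper, which simply states that Theorem~\ref{perfect} follows directly from Propositions~\ref{ring} and~\ref{lb}. In fact you are slightly more careful than the paper: you explicitly handle the $n=3$ case, which Proposition~\ref{lb} does not cover, via the trivial observation that every overhead is at least $1$.
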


{\bf Remark.}
It is interesting to compare our scenario of exploration of an $n$-node ring with possibly faulty links to the scenario from \cite{DePe} of exploration of a line of known size $n$ (without faults) by an agent starting in an unknown node of the line. (This scenario was called {\em exploration with an unanchored map} in \cite{DePe}). Notice that the latter scenario is equivalent to our ring scenario with exactly one fault. In  \cite{DePe}, an algorithm with optimal overhead was given and proved to have overhead $\sqrt{3}$. Since there were no faults in \cite{DePe}, the overhead involved a maximum ratio over all positions of the starting node.
This algorithm, if applied in our case of rings with possibly many faulty links, would have a huge overhead: arbitrarily close to 3, for large $n$, and thus much worse than DFS. This is due to the fact that, in the scenario from \cite{DePe}, the adversary can only place the starting node arbitrarily, but has no  power of changing the size of the line, as this is known to the agent. By contrast, in a ring with arbitrary fault configurations, the adversary can place arbitrarily the two faults that define the size of the connected component containing the starting node. This is what makes the optimal algorithm from \cite{DePe} so bad in our present scenario.

\section{DFS exploration of hamiltonian graphs}

Since the agent has a faithful map of the graph, for most graphs there are many DFS (Depth-First-Search) 
exploration algorithms of the fault-free component of the graph, depending on the order in which free ports at each visited node are explored.
Thus it is appropriate to speak of algorithms of the class $\cal{DFS}$, each of whose members is specified by giving a permutation of ports
at each node that determines this order. Hence we adopt the following definition.

Fix an arbitrary graph $G=(V,E)$ and any fault configuration $F$ in $G$. Let $v$ be the starting node of the agent. For any node $w$ of $G$, let $\sigma _w$ be a permutation of ports at $w$. Let $\alpha=( \sigma _w : w \in V)$. $DFS (\alpha)$ is the algorithm specified by calling the following recursive procedure 
at node $v$.

\begin{center}
\fbox{
\begin{minipage}{13cm}

{\bf Procedure} {\tt Explore}$(w)$

Let $(i_1,\dots ,i_k)$ be the sequence of free ports at $w$ listed in order given by $\sigma _w$.\\
Let  $(v_1, \dots, v_k)$ be the list of neighbors of $w$ corresponding to these ports.\\
For any $r=1,\dots ,k$, let $j_r$ be the port at $v_r$ corresponding to the edge $\{v_r,w\}$.\\
mark$(w)$\\
{\bf for} $r=1$ {\bf to} $k$ {\bf do}\\
\hspace*{1cm}{\bf if} $v_r$ is unmarked {\bf then}\\ 
\hspace*{1cm}\hspace*{1cm}take port $i_r$\\
\hspace*{1cm}\hspace*{1cm}{\tt Explore}$(v_r)$\\
\hspace*{1cm}\hspace*{1cm}{\bf if} there are still unmarked nodes {\bf then} take port $j_r$

\end{minipage}
}
\end{center}

The class $\cal{DFS}$ is defined as the class of algorithms $DFS (\alpha)$, for all sequences of permutations $\alpha=( \sigma _w : w \in V)$.
Note that for any algorithm of the class $\cal{DFS}$, the route of the agent is a tour of some spanning tree of the fault-free component $C$ of the graph,
containing the starting node and corresponding to the fault configuration $F$, in which the agent stops as soon as the last node of the component is visited. Hence the edge of the tree
corresponding to the  leaf that is visited last is traversed only once. Consequently, for any $\alpha$ and any $F$,  we have ${\cal C}(DFS(\alpha),F) \leq 2m-3$,
where $m$ is the size of the component $C$. The total computation time of any algorithm of the class $\cal{DFS}$ is linear in the number of all ports, i.e., also linear in the number of edges of the
explored graph. 

Our aim is to show that if the graph $G$ is hamiltonian then any algorithm of the class $\cal{DFS}$ has overhead not much larger than that of a perfectly competitive algorithm. Note that a hamiltonian graph must have at least three nodes, and hence we assume in the sequel that the size $n$ of the graph is at least 3. 
We first prove the following lemma.

\begin{lemma}\label{dfs}
Let $G$ be any hamiltonian graph of size $n \geq 3$. For any algorithm $A$ of the class $\cal{DFS}$ and any starting node, ${\cal O}_{A} \leq \frac{2n-4}{n-1}$.
\end{lemma}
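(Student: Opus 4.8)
The plan is to fix the algorithm $A=DFS(\alpha)$ and the starting node $v$, take an arbitrary fault configuration $F$, and bound $\frac{{\cal C}(A,F)}{opt(F)}$ by $\frac{2n-4}{n-1}$; maximizing over $F$ then gives ${\cal O}_A\le\frac{2n-4}{n-1}$. Write $m$ for the size of the fault-free component $C$ containing $v$, so $2\le m\le n$ (the case $m=1$ is trivial). As already noted in the excerpt, the route of $A$ is a tour of a spanning tree $T$ of $C$ that stops at the last visited node; more precisely ${\cal C}(A,F)=2(m-1)-d$, where $d$ is the depth in $T$ of the last visited node, since on stopping we omit exactly the return trip along the tree path from $v$ to that node. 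In particular $d\ge 1$ gives the crude bound ${\cal C}(A,F)\le 2m-3$, while $opt(F)\ge m-1$ because the optimal offline walk must reach all $m$ nodes. The heart of the argument is a dichotomy according to whether $v$ is a cut vertex of $C$.

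First suppose $v$ is \emph{not} a cut vertex of $C$. Here I invoke the standard fact that in a DFS of an undirected graph every non-tree edge joins a node to one of its ancestors; hence the subtrees hanging at distinct children of the root carry no edges between them, so two or more children of $v$ would make $v$ a cut vertex of $C$. Thus $v$ has a unique child $c_1$ in $T$. For $m\ge 3$ the last visited node is a leaf of $T$ different from both $v$ and $c_1$ (the node $c_1$ is visited second and so cannot be last), so $d\ge 2$ and ${\cal C}(A,F)\le 2m-4$. With $opt(F)\ge m-1$ this gives a ratio at most $\frac{2m-4}{m-1}=2-\frac{2}{m-1}$, which is nondecreasing in $m$ and therefore at most $\frac{2n-4}{n-1}$; the cases $m\le 2$ are checked directly.

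Second suppose $v$ \emph{is} a cut vertex of $C$. Then $C$ has no Hamiltonian path starting at $v$: deleting the first vertex of such a path would leave the remaining vertices forming a path in $C-v$, contradicting the disconnectedness of $C-v$. A walk from $v$ visiting all $m$ nodes with only $m-1$ edges is exactly such a Hamiltonian path, so here $opt(F)\ge m$. Combined with ${\cal C}(A,F)\le 2m-3$ this yields a ratio at most $\frac{2m-3}{m}=2-\frac{3}{m}\le 2-\frac{3}{n}$, and $2-\frac{3}{n}\le\frac{2n-4}{n-1}$ holds precisely when $n\ge 3$ (equivalently $3(n-1)\ge 2n$). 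In both cases the ratio is at most $\frac{2n-4}{n-1}$, which proves the lemma.

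I expect the only genuine subtlety to be the cut-vertex dichotomy and the two structural facts it rests on: that a non-cut-vertex root forces a single DFS child, hence last-leaf depth at least $2$ and the improved cost $2m-4$; and that a cut-vertex root forbids a Hamiltonian path from $v$, hence $opt(F)\ge m$. Everything else reduces to the elementary monotonicity of $2-\frac{2}{m-1}$ and $2-\frac{3}{m}$ in $m$, together with the inequality $\frac{3}{n}\ge\frac{2}{n-1}$ for $n\ge 3$. It is worth remarking that this upper bound does not actually use hamiltonicity of $G$; that hypothesis will matter only for the matching lower bound, where a Hamiltonian path from $v$ makes $opt=n-1$ and renders the ratio $\frac{2n-4}{n-1}$ tight.
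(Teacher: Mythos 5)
Your proof is correct, and its central mechanism differs genuinely from the paper's. The paper splits on whether the component $C$ admits a hamiltonian path with endpoint $v$: if yes, $opt(F)=m-1$ and it shows that the \emph{first}-traversed tree edge is traversed only once, by taking the extremal unvisited index $w_s$ along the hamiltonian path and exhibiting a free port from an already-visited neighbor ($w_{s+1}$ or $w_{s-1}$), contradicting the DFS rule; if no, $opt(F)\geq m$ and the crude bound $2m-3$ suffices. You instead split on whether $v$ is a cut vertex of $C$, and in the good case you save the same two traversals at the \emph{other} end of the walk: the no-cross-edge property of undirected DFS forces a non-cut-vertex root to have a unique child, so the last-visited leaf has depth $d\geq 2$ and ${\cal C}(A,F)=2(m-1)-d\leq 2m-4$; in the cut-vertex case you derive $opt(F)\geq m$ from the easy implication that a cut vertex $v$ admits no hamiltonian path starting at $v$. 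Note that the two dichotomies do not coincide: a hamiltonian path from $v$ makes $C-v$ connected, so your good case strictly contains the paper's Case 1, and components with no hamiltonian path from $v$ but with $v$ non-cut are handled by your stronger cost bound $2m-4$ where the paper uses the stronger benchmark bound $opt(F)\geq m$ — either suffices since $\frac{2m-3}{m}\leq\frac{2m-4}{m-1}$ for $m\geq 3$. Your accounting identity ${\cal C}(A,F)=2(m-1)-d$ is valid for the paper's Procedure {\tt Explore} (after the last node is marked the recursion unwinds with no further moves), and the cross-edge lemma does apply here, since {\tt Explore} is textbook recursive DFS on the fault-free component, merely checking markedness via the map instead of by traversal. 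What your route buys is a clean, reusable structural fact (the root of a DFS tree is a cut vertex iff it has at least two children) in place of the paper's ad hoc index-chasing along the hamiltonian path; the remaining bookkeeping (monotonicity in $m$, the inequality $\frac{3}{n}\geq\frac{2}{n-1}$ for $n\geq 3$, and the small cases $m\leq 2$) matches the paper's. Your closing remark that hamiltonicity of $G$ is never used is accurate, and it applies equally to the paper's own proof: both arguments are intrinsic to $C$, with hamiltonicity entering only later, for the tightness example on $K_n$ and for the comparison with Algorithm {\tt Ring}.
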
 

\begin{proof}
Consider any fault configuration $F$ and any starting node $v$. Let $C$ be the fault-free component, corresponding to the configuration $F$ and to the node $v$.
Let $m$ be the size of $C$. Consider two cases.

{\bf Case 1.} There exists a hamiltonian path in $C$ with $v$ as an endpoint.

Let $(w_1,\dots ,w_m)$, where $w_1=v$, be a hamiltonian path in $C$. $C$
contains all edges of this path and possibly some additional edges
$\{w_i,w_j\}$. For $m=2$, we have ${\cal
C}(A,F) = opt(F) = 1$. Assume $m\geq 3$; we have 
$opt(F)=m-1$. Let $T$ be the spanning tree of
$C$, corresponding to the route of the agent executing algorithm $A$. $T$ has
$m-1$ edges. As previously remarked, the last-traversed edge of $T$ is traversed
only once. We show that the first-traversed edge is also traversed only once.
Let $\{w_1,w_j\}$ be this edge. Suppose that this edge is traversed by the agent
a second time. This traversal must be from $w_j$ to $w_1$. This implies that at
the time $t$ of the visit of $w_j$ immediately preceding this traversal, there
are still non-visited nodes. Let $w_s$ be the node non-visited at time $t$, with
the largest index $s<j$ or with the smallest index $s>j$. If $s<j$ then
$w_{s+1}$ was visited before time $t$ and if $s>j$ then $w_{s-1}$ was visited
before time $t$. In the first case the port at $w_{s+1}$ corresponding to the
edge $\{w_{s+1},w_s\}$ is free, and in the second case the port at $w_{s-1}$
corresponding to the edge $\{w_{s-1},w_s\}$ is free.
This contradicts the fact that, according to algorithm $A$, the agent should
visit $w_s$ before time $t$.

Hence both the first-traversed edge and the last-traversed edge are traversed
only once. This implies that the cost of algorithm $A$ is at most $2m-4$,
{for $m \geq 3$}, in this case.

{\bf Case 2.} There is no hamiltonian path in $C$ with $v$ as an endpoint.

In this case $opt(F)\geq m$ and the cost of algorithm $A$ is at most $2m-3$ because the last-traversed edge is traversed only once.

Since $\frac{2m-4}{m-1}\geq \frac{2m-3}{m}$ for $m\geq 3$,
it follows that in both cases $\frac{{\cal C}(A,F)}{opt(F)} \leq \frac{2m-4}{m-1}$, whenever the component $C$ is of size $m$. Since $\frac{2m-4}{m-1}$ is an increasing function of $m$, this value is largest for $m=n$, which concludes the proof.
\end{proof} 

The following is the main result of this section.

\begin{theorem}
For any hamiltonian graph $G=(V,E)$, any starting node $v$, and any algorithm $A$ of the class ${\cal DFS}$, the ratio between the overhead
of $A$ and the overhead of a perfectly competitive algorithm on $G$ starting at $v$ is at most 10/9. Moreover, for $n \geq 24$, this ratio is less than 1.06.
\end{theorem}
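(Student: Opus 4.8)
The plan is to sandwich the target ratio between two bounds that depend only on $n$: an upper bound on ${\cal O}_A$ and a lower bound on the overhead of a perfectly competitive algorithm. Write $U(n)=\frac{2n-4}{n-1}$ and let $L(n)$ denote the value listed in Proposition \ref{lb}. By Lemma \ref{dfs} we already have ${\cal O}_A\le U(n)$ for every algorithm $A$ of class ${\cal DFS}$, so the numerator is in hand. If we can also show that \emph{every} exploration algorithm of $G$ has overhead at least $L(n)$, then in particular the perfectly competitive algorithm does, and the desired ratio is at most $U(n)/L(n)$. The whole difficulty is thus concentrated in establishing this lower bound and then in a finite numerical check.

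For the lower bound I would exploit that $G$ is hamiltonian and reduce to the ring analysis of Proposition \ref{lb}. Fix a hamiltonian cycle $R$ of $G$ and any exploration algorithm $B$ for $G$. Restrict attention to the family of fault configurations $\hat F$ in which \emph{every} edge of $G$ outside $R$ (every chord) is faulty and the remaining faulty edges lie on $R$; each such $\hat F$ corresponds bijectively to a ring fault configuration $F$ on $R$. On any $\hat F$ in this family the only free ports the agent ever encounters are the two cycle ports at each visited node, so the trajectory of $B$ is forced onto $R$ and coincides with that of a well-defined ring exploration algorithm $B'$. Moreover the fault-free component of $\hat F$ is exactly the fault-free component of $F$, and since no chord is available to the off-line optimal agent either, $opt_G(\hat F)=opt_R(F)$ and ${\cal C}(B,\hat F)={\cal C}(B',F)$. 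Hence the ratio attained by $B$ on $\hat F$ equals the ratio attained by $B'$ on $F$, and maximizing over this sub-family gives ${\cal O}_B\ge {\cal O}_{B'}$. Since Proposition \ref{lb} applies to \emph{every} ring algorithm, ${\cal O}_{B'}\ge L(n)$, and therefore every exploration algorithm of $G$ — a fortiori a perfectly competitive one — has overhead at least $L(n)$.

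It then remains to combine the bounds and verify the numerical claims. With $L(n)$ as in Proposition \ref{lb} (and $L(3)=1$ trivially, since no algorithm can beat the optimum, while $U(3)=1$ too), I would evaluate $U(n)/L(n)$ on each of the six intervals. The algebra is elementary: on each interval $U(n)/L(n)$ is either a constant or a rational function of $n$, which I would rewrite in the form $1+(\text{linear})/(\text{quadratic})$ to read off monotonicity. The governing values are $U(7)/L(7)=\frac{5/3}{3/2}=\frac{10}{9}$, which is the global maximum over $3\le n\le 23$, and $U(24)/L(24)=\frac{44/23}{47/26}=\frac{1144}{1081}<1.06$; checking that $U(n)/L(n)$ is decreasing for $n\ge 24$ shows the whole branch $n\ge24$ stays below $1.06$ with supremum attained at $n=24$.

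The main obstacle is the lower-bound reduction: one must argue cleanly that forcing all chords to be faulty turns an arbitrary $G$-algorithm into a genuine ring algorithm \emph{against the same optimal benchmark}, so that Proposition \ref{lb} transfers verbatim to $G$. The delicate points are that the restricted behavior $B'$ is a legitimate ring exploration algorithm (its moves depend only on the ring faults it discovers, the chord faults being a fixed feature of the whole sub-family) and that $opt_G(\hat F)=opt_R(F)$ (no shortcut through a chord exists). Once that transfer is secured, the remaining work is only the routine interval-by-interval evaluation of $U(n)/L(n)$, whose sole mildly delicate features are confirming the global maximum sits at $n=7$ and that the $n\ge 24$ branch never reaches $1.06$.
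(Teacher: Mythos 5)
Your proposal is correct and takes essentially the same route as the paper: the upper bound ${\cal O}_A \le \frac{2n-4}{n-1}$ from Lemma \ref{dfs}, a lower bound for \emph{every} algorithm on $G$ obtained by fixing a hamiltonian cycle $R$, declaring all chords faulty, and observing that the induced behavior is a genuine ring algorithm to which Proposition \ref{lb} applies (the paper phrases this same transfer slightly differently, via the perfect competitiveness of Algorithm {\tt Ring}, i.e., Theorem \ref{perfect}), followed by the identical interval-by-interval evaluation with maximum $10/9$ at $n=7$ and $\frac{1144}{1081}<1.06$ governing $n\ge 24$. Your explicit monotonicity check on the branch $n\ge 24$ is a minor addition the paper leaves implicit; nothing is missing.
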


\begin{proof}
Let $B$ be a perfectly competitive algorithm on $G$, starting at $v$.
Let $R$ be any hamiltonian cycle in $G$. By definition, $R$ contains the starting node $v$. Let $F^*$ be the set of all edges in $G$ except those in $R$.
Let $x=\mbox{max}_{F^* \subseteq F \subseteq E} \frac{{\cal C}(B,G,v,F)}{opt(G,v,F)}$. By definition, ${\cal O}_{B,G,v} \geq x$. On the other hand, the execution of any
algorithm for a fault configuration $F$ containing $F^*$ can be considered as an execution of an algorithm on the ring $R$ for the fault configuration $F \setminus F^*$.
Since the family of fault configurations in $G$ containing $F^*$ is equal to the family of fault configurations $F'\cup F^*$, where $F'$ is a fault configuration in $R$, 
we get $x\geq {\cal O}_{{\tt Ring},R,v}$, for otherwise Algorithm {\tt Ring} would not be perfectly competitive on $R$, contradicting Theorem \ref{perfect}. 
It follows that ${\cal O}_{B,G,v} \geq {\cal O}_{{\tt Ring},R,v}$. By Proposition \ref{lb}, ${\cal O}_{B,G,v}$ is at least:

$$
\begin{tabular}{rll}
	$\bullet$ & $\frac{2n - 3}{n}$ 		& for\ \ $4 \leq n \leq 5$ \\
	\\
	$\bullet$ & $\frac{3}{2}$  			& for\ \ $6 \leq n \leq 7$ \\
	\\
	$\bullet$ & $\frac{2n - 2}{n + 1}$  & for\ \ $8 \leq n \leq 19$ \\	
	\\
	$\bullet$ & $\frac{9}{5}$  			& for\ \ $20 \leq n \leq 23$ \\	
	\\
	$\bullet$ & $\frac{2n - 1}{n + 2}$  & for\ \ $n \geq 24$ \\		
\end{tabular} 
$$

By Lemma \ref{dfs}, the ratio $\frac{{\cal O}_{A,G,v}} {{\cal O}_{B,G,v}}$ is at most: 

\
$$
\begin{tabular}{rll}
	$\bullet$ & $1$ & when\ \ $n= 3$ \\
	\\
	$\bullet$ & $\frac{\frac{2n-4}{n-1}}{\frac{2n-3}{n}} \leq \frac{15}{14} 
	\approx 1.0714$ & when\ \ $4 \leq n \leq 5$ \\
	\\
	$\bullet$ & $\frac{\frac{2n-4}{n-1}}{\frac{3}{2}} \leq \frac{10}{9} 
	\approx 1.1111$ & when\ \ $6 \leq n \leq 7$ \\
	\\
	$\bullet$ & $\frac{\frac{2n-4}{n-1}}{\frac{2n-2}{n+1}} \leq \frac{54}{49} 
	\approx 1.1020$ & when\ \ $8 \leq n \leq 19$ \\
	\\
	$\bullet$ & $\frac{\frac{2n-4}{n-1}}{\frac{9}{5}} \leq \frac{35}{33} 
	\approx 1.0606$ & when\ \ $20 \leq n \leq 23$ \\
	\\
	$\bullet$ & $\frac{\frac{2n-4}{n-1}}{\frac{2n-1}{n+2}} \leq \frac{1144}{1081} 
	\approx 1.0583$ & when\ \ $n \geq 24$ \\
\end{tabular} 
$$

\

Since $1 < \frac{1144}{1081} < { \frac{35}{33}} < \frac{15}{14} <
{ \frac{54}{49}} < \frac{10}{9}$, the ratio between the overhead
of $A$ and the overhead of a perfectly competitive algorithm on $G$ starting at $v$ is at most 10/9. 
Moreover, for $n \geq 24$, this ratio is less than 1.06.
\end{proof}

It is natural to ask if the bound $\frac{2n-4}{n-1}$ on the overhead of any algorithm of the class ${\cal DFS}$, obtained in Lemma \ref{dfs}, is tight.
This is not the case for all hamiltonian graphs, as witnessed by the example of the ring. By Lemma \ref{class 0}, the overhead of any algorithm of the class ${\cal DFS}$ on the ring
is $\frac{2n-3}{n}<\frac{2n-4}{n-1}$. Hence a natural question is whether this bound is tight for some hamiltonian graphs. The next proposition shows that the answer to this question is positive.

\begin{proposition}
The overhead of any algorithm $A$ of the class ${\cal DFS}$ is exactly
$\frac{2n-4}{n-1}$ for the complete graph of size $n\geq 3$.
\end{proposition}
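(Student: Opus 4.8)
The upper bound ${\cal O}_A \le \frac{2n-4}{n-1}$ is immediate from Lemma \ref{dfs}, since the complete graph on $n$ nodes is hamiltonian. Hence the plan is to prove the matching lower bound: for an arbitrary algorithm $A=DFS(\alpha)$, with $\alpha=(\sigma_w:w\in V)$, and an arbitrary starting node $v$, I would exhibit a single fault configuration $F$ for which $\frac{{\cal C}(A,F)}{opt(F)}=\frac{2n-4}{n-1}$. For $n=3$ this is trivial (the triangle is a ring and every DFS is optimal, giving overhead $1=\frac{2n-4}{n-1}$), so assume $n\ge 4$.

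The target is a fault-free component whose free subgraph $H=E\setminus F$ is spanned by a depth-first tree in which the last-visited node sits at depth $2$: by the cost accounting of Lemma \ref{dfs}, such a tour costs $2(n-1)-2=2n-4$ traversals, while I simultaneously arrange a hamiltonian path from $v$ so that $opt(F)=n-1$. Concretely, I would let $a$ be the neighbour of $v$ reached by the first port of $\sigma_v$, and among the neighbours of $a$ other than $v$ let $b_1$ and $z$ be two nodes with $b_1$ preceding $z$ in $\sigma_a$ (possible since $a$ has $n-2\ge 2$ such neighbours). Listing the remaining $n-4$ nodes arbitrarily as $b_2,\dots,b_{n-3}$, I declare free exactly the edges $\{v,a\}$, $\{v,z\}$, $\{a,b_1\}$, $\{a,z\}$, and $\{b_i,b_{i+1}\}$ for $1\le i\le n-4$, and take $F$ to be all other edges. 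The resulting $H$ is a triangle on $v,a,z$ with a path $b_1,\dots,b_{n-3}$ attached at $a$; it is connected, spans all $n$ nodes, and contains the hamiltonian path $v,z,a,b_1,\dots,b_{n-3}$, whence $opt(F)=n-1$.

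It then remains to check that $A$ explores $H$ as intended. From $v$ the only free unvisited neighbours are $a$ and $z$, and since $a$ precedes $z$ in $\sigma_v$ the agent moves to $a$; at $a$ it proceeds to $b_1$ (which precedes $z$ in $\sigma_a$), is forced down the path $b_1,\dots,b_{n-3}$ because each $b_i$ has a unique free unvisited neighbour, backtracks to $a$, and finally visits the leaf $z$, at which point every node is marked and the agent halts. The depth-first tree is thus $\{v,a\},\{a,b_1\},\dots,\{b_{n-4},b_{n-3}\},\{a,z\}$, with $z$ last and at depth $2$: the edge $\{v,a\}$ is traversed once, each of the $n-3$ edges of the $a$–$b$ branch twice, and $\{a,z\}$ once, for a total of $2n-4$. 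Crucially, the chord $\{v,z\}$ is never traversed, which I would verify directly: the agent reaches $z$ only inside $a$'s subtree, so $v$ is already marked when $z$ is visited, and it never processes the $z$-port at $v$ because after exploring $a$ no unmarked node remains. This yields $\frac{{\cal C}(A,F)}{opt(F)}=\frac{2n-4}{n-1}$, matching the upper bound.

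The main obstacle, and the only place where the construction must react to $\alpha$, is reconciling the two competing demands: a long tour (cost $2n-4$) wants $H$ tree-like with heavy backtracking, yet $opt=n-1$ requires an extra chord making $H$ traceable from $v$. A carelessly placed chord would be seized by the DFS as a tree edge and collapse the tour toward a hamiltonian path of cost $n-1$. The device avoiding this is to route the single chord $\{v,z\}$ between an ancestor and a descendant and to pin down $a,b_1,z$ via the first ports of $\sigma_v$ and $\sigma_a$, so that the fixed exploration order never reaches it.
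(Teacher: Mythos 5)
Your proposal is correct and follows essentially the same strategy as the paper: take the upper bound from Lemma~\ref{dfs}, then build a fault configuration adapted to $\sigma_v$ and to the permutation at the first-visited neighbour, whose free subgraph is a hamiltonian path from $v$ plus a single chord, so that $opt(F)=n-1$ while the DFS is lured through the chord, forced down a long dead-end branch, and made to pay $2n-4$. The only (cosmetic) difference is the placement of the chord: the paper's free graph is an $(n-1)$-cycle through $v$ with a pendant vertex at $w$, whereas yours is a triangle on $v,a,z$ with a pendant path of length $n-3$ at $a$ --- the induced DFS tree and the cost accounting are identical.
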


\begin{proof}
Let $A$ be any algorithm of the class ${\cal DFS}$ starting at a node $v$ of the complete graph $K_n$.
In view of Lemma \ref{dfs}, it is enough to show a fault configuration $F$, for which $ \frac{{\cal C}(A,K_n,v,F)}{opt(K_n,v,F)} = \frac{2n-4}{n-1}$.
This configuration depends on the set of permutations $\alpha=( \sigma _w : w \in V)$ for which $A=DFS(\alpha)$.
Let $\sigma _v=(i_1,\dots ,i_{n-1})$ and let $w$ be the node corresponding to port $i_1$. Let $\sigma _w=(j_1,\dots ,j_{n-1})$ and let $j_k$ be the port number at $w$
corresponding to the edge $\{w,v\}$. Let $a=\min{\{1,\dots, n-1\} \setminus \{k\}}$ and let $b=\min{\{1,\dots, n-1\} \setminus \{k,a\}}$.
Let $u'$ be the node corresponding to port $j_a$ at node $w$ and let $u''$ be the node corresponding to port $j_b$ at node $w$. We define
$v_1=v$, $v_{n-2}=u'$, $v_{n-1}=w$, $v_n=u''$, and all nodes other than $v,u',w,u''$ are called arbitrarily $v_2,\dots, v_{n-3}$. 
Let $P$ be the set of edges $\{\{v_1,v_{n-1}\},\{v_1,v_2\}, \{v_2,v_3\},\dots, \{v_{n-1},v_n\}\}$ (see Fig. 1).

\begin{figure}[h]
	\centering
	\begin{tikzpicture}[scale=1.3] 
		
		\draw[]
			(0,0) -- (4,0)
			(6,0) -- (10,0);	
		
		\draw[dotted]
			(4,0) -- (6,0);	
		
		\draw[rounded corners=10pt] 
			(0,0) -- (0.25, 1) -- (2, 1) -- (7.75, 1) -- (8, 0);			
		
		\node [scale=1, black] at (0,0) {\textbullet};
		\node [scale=1, black] at (2,0) {\textbullet};
		\node [scale=1, black] at (4,0) {\textbullet};
		\node [scale=1, black] at (6,0) {\textbullet};
		\node [scale=1, black] at (8,0) {\textbullet};
		\node [scale=1, black] at (10,0) {\textbullet};

		\node [scale=1, black, anchor=north] at (0,0) {$v = v_{1}$};
		\node [scale=1, black, anchor=north] at (2,0) {$v_{2}$};
		\node [scale=1, black, anchor=north] at (4,0) {$v_{3}$};
		\node [scale=1, black, anchor=north] at (6,0) {$u' = v_{n-2}$};
		\node [scale=1, black, anchor=north] at (8,0) {$w = v_{n-1}$};
		\node [scale=1, black, anchor=north] at (10,0) {$u'' = v_{n}$};
		
		\node [scale=1, red] at (-0.1,0.4) {$i_{1}$};
		\node [scale=1, red] at (7.6,0.2) {$j_{a}$};
		\node [scale=1, red] at (8.4,0.2) {$j_{b}$};
		\node [scale=1, red] at (8.1,0.55) {$j_{k}$};

	\end{tikzpicture}
	\caption{The set of edges $P$}
    \label{fig:EdgesOfP}	
\end{figure}
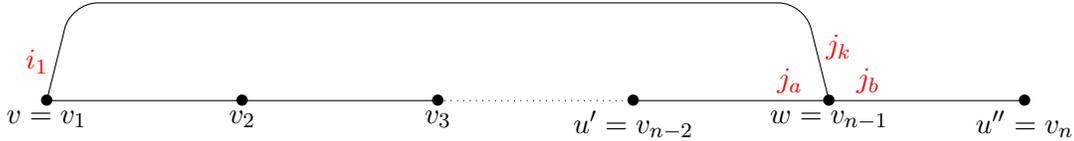

Consider the fault configuration $F$ consisting of all edges of $K_n$
except those in $P$. For this fault configuration the optimal cost is $n-1$, corresponding to traveling the path $v_1, v_2,\dots ,v_n$. The agent executing algorithm $A$ first takes port
$i_1$ going to node $w$, then takes port $j_a$ which is the first free port at $w$ according to permutation $\sigma _w$, apart from the port $j_k$ by which the agent has just come to $w$. At this point the agent
is at $u'=v_{n-2}$. Next the agent travels the path $(v_{n-2},v_{n-3},\dots, v_2)$, backtracks by the reverse path $(v_2,v_3,\dots,v_{n-2})$ to $u'=v_{n-2}$, backtracks to 
$w=v_{n-1}$, and finally goes to $u''=v_n$. The total cost is $2n-4$. Hence $ \frac{{\cal C}(A,K_n,v,F)}{opt(K_n,v,F)} = \frac{2n-4}{n-1}$.
\end{proof}

\section{Conclusion}

We designed a perfectly competitive exploration algorithm for rings,  and we proved that, for all hamiltonian graphs, the overhead of {\em every} algorithm of the class ${\cal DFS}$ exceeds that of a perfectly competitive algorithm for the given graph by less than
a factor of 10/9. The latter result does not hold for arbitrary graphs. This is best witnessed by the case of a line, for which the starting node is at distance 1 from one of the endpoints. In this case 
the perfectly competitive algorithm has overhead 1 (as shown in \cite{MaPe}), and the ``wrong'' DFS (i.e., the one that goes first to the farther endpoint) has overhead $\frac{2n-1}{n+1}$, for lines of length $n$, i.e., it is arbitrarily close to 2 for long lines.
This still leaves the hope that the ``best'' member of the class ${\cal DFS}$, i.e., the DFS algorithm with the smallest overhead, for any given graph, may have overhead similar to that of a perfectly
competitive algorithm for this graph. This might be true, but finding this best DFS (a trivial task in the case of the line) may be very challenging for arbitrary graphs. We know neither
if the ``best'' DFS for an arbitrary graph has an overhead similar to that of a a perfectly
competitive algorithm for this graph, nor, even if this is the case, how to find this particular DFS, or any other of similar overhead. 

Going beyond the class ${\cal DFS}$, the most important question seems to be, whether there exists a perfectly competitive algorithm for arbitrary graphs,
whose total computation time is polynomial in the size of the graph. Unless such an algorithm is constructed, 
an important challenge would be finding, for an arbitrary graph, some exploration algorithm whose overhead exceeds that of a perfectly competitive algorithm
for the given graph by a small factor. In view of the present paper and of \cite{MaPe} the situation is the following. For the class  of hamiltonian graphs and for the class of trees, such algorithms exist, and the ratio
is at most 10/9 in the first case and at most 9/8 in the second case. For these two classes of graphs, the algorithms performing so well are very simple, and the total computation time
used during exploration is linear in the number of edges of the explored graph.
(Since {\em any} algorithm of the class ${\cal DFS}$ on hamiltonian graphs is good for our purpose, we can take, e.g.,
the one induced by the increasing permutation of ports at each node.) Is it possible to design, for an arbitrary graph, an exploration algorithm whose overhead exceeds that of a perfectly competitive algorithm for this graph by a factor of, say, at most 1.15, and for which the total computation time
is polynomial in the size of the graph?

Other interesting variants of the problem of exploration in the presence of faulty edges are when the agent knows the exact value of the number of faults, or some upper bound on this number.

\end{document}